\newtheorem{defi}{Definition}
\newtheorem{prop}{Proposition} 
\newtheorem{thm}{Theorem}[section]
\begin{document}

\title{{\bf Absolute zeta functions and periodicity \\ 
of quantum walks on cycles}
\vspace{15mm}}

\author{Jir\^o AKAHORI$^{1}$, $\quad$ Norio KONNO$^{2, \ast}$, 
\\
\\
Iwao SATO$^{3}$, $\quad$ Yuma TAMURA$^{4}$,
\\ 
\\ 
\\
Department of Mathematical Sciences \\
College of Science and Engineering \\
Ritsumeikan University \\
1-1-1 Noji-higashi, Kusatsu, 525-8577, JAPAN \\
e-mail: akahori@se.ritsumei.ac.jp$^{1}$, \ n-konno@fc.ritsumei.ac.jp$^{2,\ast},$\\
ytamura11029@gmail.com$^{4}$ 
\\
\\
Oyama National College of Technology \\
771 Nakakuki, Oyama 323-0806, JAPAN \\
e-mail: isato@oyama-ct.ac.jp$^{3}$
}

\date{\empty }

\maketitle

\vspace{50mm}


\vspace{20mm}

\begin{small}
\par\noindent
{\bf Corresponding author}: Yuma Tamura, Department of Mathematical Sciences, College of Science and Engineering, Ritsumeikan University, 1-1-1 Noji-higashi, Kusatsu, 525-8577, JAPAN \\
e-mail: ytamura11029@gmail.com
\par\noindent
\\
\par\noindent
{\bf Abbr. title: Absolute zeta functions and periodicity of quantum walks} 

\end{small}

\clearpage

\begin{abstract}
The quantum walk is a quantum counterpart of the classical random walk. On the other hand, absolute zeta functions can be considered as zeta functions over $\mathbb{F}_1$. This study presents a connection between quantum walks and absolute zeta functions. In this paper, we focus on Hadamard walks and $3$-state Grover walks on cycle graphs. The Hadamard walks and the Grover walks are typical models of the quantum walks. We consider the periods and zeta functions of such quantum walks. Moreover, we derive the explicit forms of the absolute zeta functions of corresponding zeta functions. Also, it is shown that our zeta functions of quantum walks are absolute automorphic forms. 
\end{abstract}

\vspace{10mm}

\begin{small}
\par\noindent
{\bf Keywords}: Quantum walk, Absolute zeta function, Grover walk, Hadamard walk, cycles
\end{small}

\vspace{10mm}

\section{Introduction \label{sec01}}
This work is a continuation of \cite{AkahoriEtAL2023,Konno2023}. Quantum walks are considered to be the corresponding model for random walks in quantum systems. Quantum walks play important roles in various fields such as mathematics, quantum physics, and quantum information processing. Concerning quantum walks, see \cite{GZ, Konno2008, ManouchehriWang, Portugal, Venegas}, and as for random walks, see \cite{Norris, Spitzer}, for instance. On the other hand, absolute zeta functions are zeta functions over $\mathbb{F}_1$, where $\mathbb{F}_1$ can be viewed as a kind of limit of $\mathbb{F}_p$ as $p \to 1$. Here $\mathbb{F}_p = \mathbb{Z}/p \mathbb{Z}$ stands for the field of $p$ elements for a prime number $p$. This paper presents a connection between quantum walks and absolute zeta functions. Concerning absolute zeta functions, see \cite{CC, KF1, Kurokawa3, Kurokawa, KO, KT3, KT4, Soule}.

In this study, we deal with Hadamard walks and $3$-state Grover walks on cycle graphs. More precisely, we study the periods and zeta functions of such quantum walks. We provide simple and unified proof to determine the periods, although some of the results are already known. Moreover, we calculate the absolute zeta functions of Hadamard walks and Grover walks in the case where the period is finite. The explicit forms of the absolute zeta functions can be seen in this paper. It is also shown that the zeta functions of such quantum walks are absolute automorphic forms. 

The rest of this paper is organized as follows. Section \ref{sec02} briefly describes absolute zeta functions and its some related topics. In Section \ref{CycleGraphs}, we introduce cycle graphs and quantum walks on them. Section \ref{CyclotomicPoly} is a brief check of the notion of cycrotomic polynomial. Section \ref{Hadamard} explains the period and zeta functions of Hadamard walks on cycle graphs, and also absolute zeta function of them. In Section \ref{Grover}, we treat Grover walks with $3$ states. Finally, Section \ref{conclusion} is devoted to conclusion.

\section{Absolute Zeta Function \label{sec02}}
First we introduce the following notation: $\mathbf{Z}$ is the set of integers, $\mathbf{Z}_{>0} = \{1,2,3, \ldots \}$,  $\mathbf{R}$ is the set of real numbers, and $\mathbf{C}$ is the set of complex numbers. 

In this section, we briefly review the framework on absolute zeta functions, which can be considered as zeta function over $\mathbf{F}_1$, and absolute automorphic forms (see \cite{Kurokawa3, Kurokawa, KO, KT3, KT4} and references therein, for example). 

Let $f(x)$ be a function $f : \mathbf{R} \to \mathbf{C} \cup \{ \infty \}$. We say that $f$ is an {\em absolute automorphic form} of weight $D$ if $f$ satisfies
\begin{align*}
f \left( \frac{1}{x} \right) = C x^{-D} f(x)
\end{align*}
with $C \in \{ -1, 1 \}$ and $D \in \mathbf{Z}$. The {\em absolute Hurwitz zeta function} $Z_f (w,s)$ is defined by
\begin{align*}
Z_f (w,s) = \frac{1}{\Gamma (w)} \int_{1}^{\infty} f(x) \ x^{-s-1} \left( \log x \right)^{w-1} dx,
\end{align*}
where $\Gamma (x)$ is the gamma function (see \cite{Andrews1999}, for instance). Then taking $u=e^t$, we see that $Z_f (w,s)$ can be rewritten as the Mellin transform: 
\begin{align*}
Z_f (w,s) = \frac{1}{\Gamma (w)} \int_{0}^{\infty} f(e^t) \ e^{-st} \ t^{w-1} dt.
\end{align*}
Moreover, the {\em absolute zeta function} $\zeta_f (s)$ is defined by 
\begin{align*}
\zeta_f (s) = \exp \left( \frac{\partial}{\partial w} Z_f (w,s) \Big|_{w=0} \right).
\end{align*}
Here we introduce the {\em multiple Hurwitz zeta function of order $r$}, $\zeta_r (s, x, (\omega_1, \ldots, \omega_r))$, the {\em multiple gamma function of order $r$}, $\Gamma_r (x, (\omega_1, \ldots, \omega_r))$, and the {\em multiple sine function of order $r$}, $S_r (x, (\omega_1, \ldots, \omega_r))$, respectively (see \cite{Kurokawa3, Kurokawa, KT3}, for example): 
\begin{align*}
\zeta_r (s, x, (\omega_1, \ldots, \omega_r))
&= \sum_{n_1=0}^{\infty} \cdots \sum_{n_r=0}^{\infty} \left( n_1 \omega_1 + \cdots + n_r \omega_r + x \right)^{-s}, 
\\
\Gamma_r (x, (\omega_1, \ldots, \omega_r)) 
&= \exp \left( \frac{\partial}{\partial s} \zeta_r (s, x, (\omega_1, \ldots, \omega_r)) \Big|_{s=0} \right),
\\
S_r (x, (\omega_1, \ldots, \omega_r))
&= \Gamma_r (x, (\omega_1, \ldots, \omega_r))^{-1} \ \Gamma_r (\omega_1+ \cdots + \omega_r - x, (\omega_1, \ldots, \omega_r))^{(-1)^r}.
\end{align*}
\par
Now we present the following key result derived from Theorem 4.2 and its proof in Korokawa \cite{Kurokawa} (see also Theorem 1 in Kurokawa and Tanaka \cite{KT3}):

\begin{thm}\label{ExplicitAZeta}
    If \(f\) has the form
    \[
        f(x) = x^{l/2} \frac{ ( x^{m(1)}-1 ) \cdots ( x^{m(a)}-1 ) }{ ( x^{n(1)}-1 ) \cdots ( x^{n(b)}-1 ) }
    \]
    for some $ l \in \mathbf{Z} $, $ a,b \in \mathbf{Z}_{>0} $, $ m(i), n(j) \in \mathbf{Z}_{>0}\, (i=1,\dots,a, j=1,\dots,b) $, then the following holds:
    \begin{align*}
        Z_f(w,s) &= \sum_{ I \subset \{ 1,\dots,a \} } (-1)^{|I|} \zeta_b( w, s-\deg(f) + m(I), \mbox{\boldmath $n$} ),   \\
        \zeta_f(s) &= \prod_{ I \subset \{ 1,\dots,a \} } \Gamma_b( s - \deg(f) + m(I), \mbox{\boldmath $n$} )^{ (-1)^{ |I| } },    \\
        \zeta_f( D-s )^C &= \varepsilon_f(s) \zeta_f(s),
    \end{align*}
    where
    \begin{gather*}
        \deg(f) = l/2 + \sum_{i=1}^a m(i) - \sum_{j=1}^b n(j), \quad m(I) = \sum_{ i \in I } m(i), \\
        \mbox{\boldmath $n$} = ( n(1), \dots, n(j) ), \quad D = l + \sum_{i=1}^a m(i) - \sum_{j=1}^b n(j), \\
        C = (-1)^{a-b}, \quad \varepsilon_f= \prod_{ I \subset \{ 1,\dots,a \} } S_b( s - \deg(f) + m(I), \mbox{ \boldmath $n$ } ).
    \end{gather*}
\end{thm}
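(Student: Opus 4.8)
The plan is to follow the method behind Theorem~4.2 of Kurokawa~\cite{Kurokawa} (cf.\ also Kurokawa--Tanaka~\cite{KT3}): expand $f$ into a signed sum of pure powers of $x$ valid for $x>1$, integrate term by term in the Mellin form of $Z_f$, recognise the resulting series as multiple Hurwitz zeta functions, and then read off $\zeta_f$ and its functional equation from the definitions of $\Gamma_b$ and $S_b$. Concretely, writing $M=\sum_{i=1}^{a}m(i)$ and $N=\sum_{j=1}^{b}n(j)$, one has $\prod_{i=1}^{a}\bigl(x^{m(i)}-1\bigr)=\sum_{I\subset\{1,\dots,a\}}(-1)^{a-|I|}x^{m(I)}$ and, expanding each factor $\bigl(x^{n(j)}-1\bigr)^{-1}=\sum_{k\geq 1}x^{-n(j)k}$ as a geometric series, $\prod_{j=1}^{b}\bigl(x^{n(j)}-1\bigr)^{-1}=x^{-N}\sum_{k_1,\dots,k_b\geq 0}x^{-(n(1)k_1+\cdots+n(b)k_b)}$ for $x>1$; multiplying by $x^{l/2}$ represents $f$ as a double series of pure powers of $x$.

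Next I would substitute $x=e^{t}$ into $Z_f(w,s)=\Gamma(w)^{-1}\int_{0}^{\infty}f(e^{t})\,e^{-st}\,t^{w-1}\,dt$ and integrate term by term, using $\int_{0}^{\infty}e^{-\alpha t}t^{w-1}\,dt=\Gamma(w)\,\alpha^{-w}$ for $\mathrm{Re}\,\alpha>0$. For $\mathrm{Re}\,s$ large this gives
\[
 Z_f(w,s)=\sum_{I\subset\{1,\dots,a\}}(-1)^{a-|I|}\sum_{k_1,\dots,k_b\geq 0}\bigl(s-l/2-m(I)+N+n(1)k_1+\cdots+n(b)k_b\bigr)^{-w},
\]
in which the inner sum over $(k_1,\dots,k_b)$ is exactly $\zeta_b\bigl(w,\,s-l/2-m(I)+N,\,\boldsymbol{n}\bigr)$. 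Re-indexing the outer sum by the complement $I\mapsto I^{c}:=\{1,\dots,a\}\setminus I$, under which $(-1)^{a-|I|}$ becomes $(-1)^{|I|}$ and $m(I)$ becomes $M-m(I)$, turns the shift $s-l/2-m(I)+N$ into $s-(l/2+M-N)+m(I)=s-\deg(f)+m(I)$; this is the first asserted formula. Differentiating in $w$ at $w=0$, using linearity and the identity $\frac{\partial}{\partial w}\zeta_b(w,x,\boldsymbol{n})|_{w=0}=\log\Gamma_b(x,\boldsymbol{n})$, which is just the definition of $\Gamma_b$, and then exponentiating, gives the formula for $\zeta_f$.

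For the functional equation I would replace $s$ by $D-s$ in the product formula for $\zeta_f(s)$. Since $D-\deg(f)=l/2$, each argument becomes $l/2-s+m(I)$; re-indexing by complements once more and using $l/2-s+m(I^{c})=N-\bigl(s-\deg(f)+m(I)\bigr)$, the reflection identity $\Gamma_b(N-x,\boldsymbol{n})^{(-1)^{b}}=S_b(x,\boldsymbol{n})\,\Gamma_b(x,\boldsymbol{n})$ --- the definition of $S_b$ with $n(1)+\cdots+n(b)=N$ --- converts $\zeta_f(D-s)^{(-1)^{a-b}}$ into the product of $\zeta_f(s)$ with the corresponding product over $I$ of the multiple sine values $S_b\bigl(s-\deg(f)+m(I),\boldsymbol{n}\bigr)$, i.e.\ into $\varepsilon_f(s)\,\zeta_f(s)$. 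The two sign parities $(-1)^{a}$ coming from the re-indexing and $(-1)^{b}$ coming from the reflection identity combine to produce the exponent $C=(-1)^{a-b}$. As a consistency check one also verifies directly that $f(1/x)=(-1)^{a-b}x^{-D}f(x)$, so that $f$ is an absolute automorphic form of weight $D$, in accordance with the shape of this functional equation.

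The only delicate point is rigour, not strategy: one must justify interchanging the double series --- over $I$ and over $(k_1,\dots,k_b)$ --- with the integral, which is legitimate for $\mathrm{Re}\,s$ sufficiently large by absolute convergence, and then extend the identity $Z_f(w,s)=\sum_{I}(-1)^{|I|}\zeta_b\bigl(w,s-\deg(f)+m(I),\boldsymbol{n}\bigr)$ to all $(w,s)$ by the meromorphic continuation of the multiple Hurwitz zeta functions $\zeta_b$. Once this identity is established, the remaining two assertions follow formally from it together with the definitions of $\Gamma_b$ and $S_b$.
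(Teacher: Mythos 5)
Your proposal is correct and follows essentially the same route as the source the paper relies on: the paper gives no proof of this theorem, citing it as a consequence of Theorem~4.2 of Kurokawa (and Theorem~1 of Kurokawa--Tanaka), and your argument --- geometric-series expansion of $f$ into pure powers for $x>1$, term-by-term Mellin integration, re-indexing by complements, and the reflection identity defining $S_b$ --- is precisely the standard derivation of that result. One small remark: your computation actually produces $\varepsilon_f(s)=\prod_{I}S_b\bigl(s-\deg(f)+m(I),\mbox{\boldmath $n$}\bigr)^{(-1)^{|I|}}$, with the exponents $(-1)^{|I|}$; this is consistent with how $\varepsilon_f$ is used in the later applications (e.g.\ Theorem~\ref{HAZ}), so the omission of the exponent in the statement of $\varepsilon_f$ above appears to be a typo rather than an error in your argument.
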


\section{Quantum walks on Cycle graphs}\label{CycleGraphs}

For $ N \ge 2 $, \emph{undirected cycle graph with $N$ vertices} is an undirected graph which has $ N $ vertices and each vertex connecting to exactly $2$ edges. We write this graph by $C_N$. Formally, $ C_N $ is defined in the following way:

\begin{defi}
    The set of vertices of $ C_N $ is $ \{ 0, 1, \dots, N-1 \} $ and the set of edges of $ C_N $ is $ \{ \{ k, k+1 \} \mid k=0,\dots,N-1 \} $, where the numbers are identified modulo $N$.
\end{defi}
%
A quantum walk is the time-evolving sequence of states consisting of position and chirality. Formally, a state is a vector which is an element of a tensor product of two Hilbert spaces over $ \mathbf{C} $, $ \mathcal{H}_P $ and $ \mathcal{H}_C  $. In this paper, $ \mathcal{H}_P $ is a vector space over $ \mathbf{C} $ in which $ \{ \ket{x} \mid x \in V(C_N) \} $ is orthonormal basis, where $V(C)$ is the set of vertices of $C_N$. Also, $ \mathcal{H}_C $ is a vector space over $ \mathbf{C} $ in which $ \{ \ket{ \leftarrow }, \ket{ \rightarrow } \} $ is an orthonormal basis for Hadamard walks, and in which $ \{ \ket{ \leftarrow }, \ket{\cdot}, \ket{ \rightarrow } \} $ is an orthonormal basis for Grover walks with $3$ states. Note that the elements of $ \mathcal{H}_P $ and $ \mathcal{H}_C $ are considered to be the orthonormal basis of each space. Then, each state can be represented like the following:
\[
    \sum_{ x \in V(C_n) } \ket{x} \otimes s, \qquad s \in \mathcal{H}_C.
\]
Usually, we assume that the initial state $ \Psi_0 $ satisfies $ \| 
\Psi_0 \| = 1 $. Moreover, we consider the case where the time-evolution operator $ U $ is decomposed as $ U = SC $. Here, $ S $ is called \emph{shift operator} and defined by the following formulas:
\begin{align*}
    S( \ket{x} \otimes \ket{ \leftarrow } ) &:= \ket{ x-1 } \otimes \ket{ \leftarrow }, \\
    S( \ket{x} \otimes \ket{ \rightarrow }) &:= \ket{ x+1 } \otimes \ket{ \rightarrow },   \\
    S( \ket{x} \otimes \ket{ \cdot }) &:= \ket{ x } \otimes \ket{ \cdot }.
\end{align*}
Furthermore, $ C $ is called \emph{coin operator} and defined by the following:
\[
    C := \sum_{ x \in V(C_N) } \ketbra{x}{x} \otimes A
\]
for some unitary operator $A$ on $ \mathcal{H}_C $. We call this operator $ A $ \emph{the local coin operator}. In this case, $ S $ and $ C $ are both unitary, and then $ U $ is also unitary. Now, the time-evolution is defined as usual:
\[
    \Psi_{n+1} := U \Psi_n.
\]
Of course, we have $ \Psi_n = U^n \Psi_0 $. We are interested in this time-evolution operator $U$. In each of the subsequent subsections, matrix representations of $ U $ are shown. Moreover, we introduce the period of quantum walk.
\begin{defi}
    For a quantum walk whose time-evolution operater is $U$, \emph{the period of the quantum walk} is defined as the infimum:
    \[
        \inf \{ n \ge 1 \mid U^n = 1 \}.
    \]
    If the set in the above formula is empty, then the period is defined to be $ \infty $.
\end{defi}
Of course, if $ T $ is the period of a quantum walk, it holds that
\[
    \Psi_T = \Psi_0
\]
for any initial state $ \Psi_0 $.

\section{Cyclotomic polynomials}\label{CyclotomicPoly}
Also, we treat polynomial rings and cyclotomic polynomials in this paper.
\begin{defi}
    $ \mathbf{Z}[x] $ and $ \mathbf{Q}[x] $ denote the polynomial rings with integer and rational coefficients, respectively.
\end{defi}
Then \emph{cyclotomic polynomials} are defined as follows:
\begin{defi}
    For $ n \in \mathbf{Z}_{>0} $, \emph{cyclotonomic polynomial} $ \Phi_n(x) $ is defined by the following formula:
    \[
        \Phi_n(x) := \prod_{ \substack{ 1 \le k \le n-1 \\ \gcd(k,n) = 1 } } \biggl( x - \exp\biggl( \frac{ 2 \pi i k}{ n } \biggr) \biggr).
    \]
\end{defi}
Note that $ \Phi_n( x ) \in \mathbf{Z}[x] $ for all $n$. Now, the subsequent proposition is the key of this paper.
\begin{prop}[See e.g. \cite{HiguchiEtAl2017}]\label{monic_unity}
    If all of the roots of a monic polynomial with rational coefficients $ f(x) $ are roots of unity, then $ f(x) \in \mathbf{Z}[x] $.
\end{prop}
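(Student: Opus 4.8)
The plan is to reduce the statement to two classical facts: that roots of unity are algebraic integers, and that $\mathbf{Z}$ is integrally closed in $\mathbf{Q}$. First I would note that if $\zeta \in \mathbf{C}$ is a root of unity, say $\zeta^n = 1$, then $\zeta$ is a root of the monic polynomial $x^n - 1 \in \mathbf{Z}[x]$, so $\zeta$ is an algebraic integer. Next I would recall the standard fact that the set $\overline{\mathbf{Z}}$ of algebraic integers is a subring of $\mathbf{C}$, i.e.\ it is closed under addition and multiplication; this follows from the characterization that $\alpha$ is an algebraic integer if and only if $\mathbf{Z}[\alpha]$ is a finitely generated $\mathbf{Z}$-module, together with the observation that $\mathbf{Z}[\alpha,\beta]$ is finitely generated over $\mathbf{Z}$ whenever $\alpha,\beta$ are algebraic integers.

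With these in hand, write $f(x) = \prod_{k=1}^{d} (x - \alpha_k)$ over $\mathbf{C}$, where $d = \deg f$ and $\alpha_1, \dots, \alpha_d$ are the roots of $f$ listed with multiplicity; by hypothesis each $\alpha_k$ is a root of unity, hence an algebraic integer. Expanding the product, the coefficient of $x^{d-j}$ in $f$ equals $(-1)^j e_j(\alpha_1, \dots, \alpha_d)$, where $e_j$ denotes the $j$-th elementary symmetric polynomial. Since $\overline{\mathbf{Z}}$ is a ring, each $e_j(\alpha_1,\dots,\alpha_d)$ is a sum of products of the $\alpha_k$ and is therefore an algebraic integer, so every coefficient of $f$ is an algebraic integer.

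Finally I would invoke that a rational number which is an algebraic integer lies in $\mathbf{Z}$: if $p/q \in \mathbf{Q}$ in lowest terms satisfies a monic relation $(p/q)^m + c_{m-1}(p/q)^{m-1} + \cdots + c_0 = 0$ with $c_i \in \mathbf{Z}$, then clearing denominators yields $p^m = -q\,(c_{m-1} p^{m-1} + \cdots + c_0 q^{m-1})$, so $q \mid p^m$ and hence $q = \pm 1$. Applying this to the coefficients of $f$, which are rational by assumption and algebraic integers by the previous step, shows $f(x) \in \mathbf{Z}[x]$.

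There is no genuine obstacle here; the only items requiring care are the two standard lemmas (that $\overline{\mathbf{Z}}$ is a ring, and that $\mathbf{Z}$ is integrally closed in $\mathbf{Q}$), both of which are classical and could equally well be cited rather than proved. An alternative route, closer in spirit to Section~\ref{CyclotomicPoly}, is to factor $f$ into monic irreducibles over $\mathbf{Q}$: each such factor is the minimal polynomial over $\mathbf{Q}$ of one of its roots, a root of unity, hence equals some cyclotomic polynomial $\Phi_n(x) \in \mathbf{Z}[x]$; since a product of monic polynomials in $\mathbf{Z}[x]$ is again in $\mathbf{Z}[x]$, the claim follows. This version has the drawback of additionally requiring the irreducibility of $\Phi_n$ over $\mathbf{Q}$, which is itself a nontrivial input, so I would favor the algebraic-integer argument above.
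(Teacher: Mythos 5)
Your argument is correct and complete, but it is not the route the paper takes. The paper disposes of this proposition in one sentence: factor $f$ into monic irreducibles over $\mathbf{Q}$, observe that each factor is the minimal polynomial of a root of unity and hence a cyclotomic polynomial $\Phi_n(x) \in \mathbf{Z}[x]$, and multiply back up --- i.e.\ precisely the ``alternative route'' you sketch and then set aside at the end. Your preferred argument instead runs through the ring of algebraic integers: each root of unity is an algebraic integer, the elementary symmetric functions of the roots are therefore algebraic integers, and a rational algebraic integer is an integer. The trade-off is exactly as you describe: the paper's approach leans on the fact that the minimal polynomial of a primitive $n$-th root of unity is $\Phi_n$, which amounts to the irreducibility of $\Phi_n$ over $\mathbf{Q}$ (a nontrivial classical theorem), whereas yours needs only that $\overline{\mathbf{Z}}$ is a ring and that $\mathbf{Z}$ is integrally closed in $\mathbf{Q}$ --- and you supply a clean proof of the latter. (One could also interpolate between the two: by Gauss's lemma any monic rational divisor of a monic integer polynomial such as a product of $x^{n_i}-1$'s is already in $\mathbf{Z}[x]$, which avoids irreducibility as well.) Either way the proposition holds, and your version is arguably the more economical in terms of imported machinery; it also generalizes immediately to the statement that a monic rational polynomial all of whose roots are algebraic integers lies in $\mathbf{Z}[x]$, of which the root-of-unity case is a special instance.
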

Here, \emph{monic} means ``the nonzero coefficient of highest degree is equal to $1$,'' and \emph{root of unity} means a complex number $ z $ which satisfies
\[
    z^n = 1
\]
for some $ n \in \mathbf{Z}_{>0} $. This proposition is a consequence of the fact that the minimal polynomial over $ \mathbf{Q} $ of any root of unity is cyclotonomic polynomial, in particular, integer coefficient polynomial.

In this paper, this proposition is used in the following manner. First, note that for any square matrix $ A $ and positive integer $n$, if $ \lambda $ is an eigenvalue of $ A $, then $ \lambda^n $ is an eigenvalue of $ A^n $. Therefore, if $ A $ has a complex number which is not root of unity as its eigenvalue, then every $ A^n $ has a complex number which is not equal to $ 1 $ as its eigenvalue. This implies that $ A^n $ is not the identity matrix.

\section{Hadamard walks}\label{Hadamard}

Hadamard walks are a well-studied class of quantum walks. There are two types of Hadamard walks: M type and F type. They are characterized by local coin operators as usual. First, we introduce the definition:

\begin{defi}
    \emph{M-type} (respectively \emph{F-type}) \emph{Hadamard walks} on $ C_N $ are the quantum walks whose local coin operators with respect to the ordered basis $ ( \ket{\leftarrow}, \ket{\rightarrow} ) $ are as follows respectively:
    \begin{align*}
        A^{H,M} = \frac{1}{ \sqrt{2} }\,
        \begin{bmatrix}
            1   &   1   \\
            1   &  -1
        \end{bmatrix},
        &&  A^{H,F} = \frac{1}{ \sqrt{2} }\,
        \begin{bmatrix}
            1   &  -1   \\
            1   &   1
        \end{bmatrix}.
    \end{align*}
\end{defi}

In this paper, we focus on Hadamard walks of F type. $ U^{H,F}_N $ denotes the time-evolution operator of an F-type Hadamard walk on $ C_N $. With respect to the ordered basis of $ \mathcal{H}_P \otimes \mathcal{H}_C $ $ ( \bra{0} \otimes \bra{\leftarrow}, \bra{0} \otimes \bra{\rightarrow}, \bra{1} \otimes \bra{\leftarrow}, \dots, \bra{N-1} \otimes \bra{ \rightarrow } ) $, the matrix representation of $ U^{H,F}_N $ is as follows:
\[
    U^{H,F}_2 =
    \begin{bmatrix}
        O       & A^{H,F}   \\
        A^{H,F} & O
    \end{bmatrix}   
\]
and
\[
    U^{H,F}_N =
    \begin{bmatrix}
        O      & L      & O      & \cdots & O      & R       \\
        R      & O      & L      & \cdots & O      & O       \\
        O      & R      & O      & \cdots & O      & O       \\
        \vdots & \vdots & \vdots & \ddots & \vdots & \vdots  \\
        O      & O      & O      & \cdots & O      & L       \\
        L      & O      & O      & \cdots & R      & O
    \end{bmatrix}   
\]
for $ N \ge 3 $, where $ O $ represents the zero matrix, and $ R $ and $ L $ are the following matrices:
\begin{align*}
    L := \frac{1}{\sqrt{2}}
    \begin{bmatrix}
        1 & 1   \\
        0 & 0
    \end{bmatrix},
    &&
    R := \frac{1}{\sqrt{2}}
    \begin{bmatrix}
        0 & 0   \\
        1 & -1
    \end{bmatrix}.
\end{align*}
Note that $ L $ and $ R $ can be represented as
\begin{align*}
    L = 
    \begin{bmatrix}
        1 & 0 \\
        0 & 0
    \end{bmatrix}
    A^{H,F}
    &&
    \text{and}
    &&
    R = 
    \begin{bmatrix}
        0 & 0 \\
        0 & 1
    \end{bmatrix}
    A^{H,F}.
\end{align*}

Furthermore, let $ f^{H,F}_N $ be the characteristic polynomial of $ U^{H,F}_N $, that is,
\[
    f^{H,F}_N(x) := \det( x I_{2N} - U^{H,F}_N ).
\]
By definition, the factorization of $ f^{H,F}_N(x) $ is obtained:
\begin{prop}\label{Hfactorization}
    For $ N \ge 2 $, it holds that
    \begin{align*}
        f^{H,F}_N(x) &= \prod_{k=0}^{N-1} ( x^2 - \sqrt{2} \cos( 2 \pi k / N )\, x + 1 ).
    \end{align*}
\end{prop}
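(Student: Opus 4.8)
The plan is to exploit the block-circulant structure of $U^{H,F}_N$ and to diagonalize it by a discrete Fourier transform on the position register. Reading off the displayed matrix, $U^{H,F}_N$ sends a state $(\psi_0,\dots,\psi_{N-1})$ with $\psi_x\in\mathbf{C}^2$ to the vector whose $y$-th block is $L\psi_{y+1}+R\psi_{y-1}$, indices taken modulo $N$; equivalently,
\[
    U^{H,F}_N = P\otimes L + P^{-1}\otimes R ,
\]
where $P$ is the $N\times N$ cyclic permutation matrix with $(P)_{i,j}=1$ iff $j\equiv i+1\pmod N$, and $L$, $R$ are as above. One checks that this single identity also covers the case $N=2$: there $P=P^{-1}$, so the right-hand side collapses to $P\otimes(L+R)=P\otimes A^{H,F}$, which is exactly the matrix displayed for $C_2$.

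Next I would diagonalize $P$, and hence simultaneously $P^{-1}$, by the Fourier eigenbasis $v_k=N^{-1/2}\bigl(1,\omega^k,\omega^{2k},\dots,\omega^{(N-1)k}\bigr)^{\mathrm T}$ for $k=0,\dots,N-1$, where $\omega=e^{2\pi i/N}$: a direct computation gives $Pv_k=\omega^k v_k$ and $P^{-1}v_k=\omega^{-k}v_k$, and the $v_k$ are orthonormal. Writing $W$ for the unitary with columns $v_0,\dots,v_{N-1}$, conjugation by $W\otimes I_2$ carries $U^{H,F}_N$ into the block-diagonal matrix with $2\times2$ diagonal blocks
\[
    \widehat U_k = \omega^k L + \omega^{-k} R = \frac{1}{\sqrt2}\begin{bmatrix}\omega^k & -\omega^k\\ \omega^{-k} & \omega^{-k}\end{bmatrix},\qquad k=0,\dots,N-1 ,
\]
where the second equality uses $L=\left(\begin{smallmatrix}1&0\\0&0\end{smallmatrix}\right)A^{H,F}$ and $R=\left(\begin{smallmatrix}0&0\\0&1\end{smallmatrix}\right)A^{H,F}$.

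Now the proposition reduces to a two-by-two computation: $\operatorname{tr}\widehat U_k=\tfrac{1}{\sqrt2}(\omega^k+\omega^{-k})=\sqrt2\cos(2\pi k/N)$ and $\det\widehat U_k=\tfrac12\omega^k\omega^{-k}-\tfrac12(-\omega^k)\omega^{-k}=\tfrac12+\tfrac12=1$, so $\det(xI_2-\widehat U_k)=x^2-\sqrt2\cos(2\pi k/N)\,x+1$. Since the characteristic polynomial is invariant under conjugation and multiplicative over a direct sum of blocks,
\[
    f^{H,F}_N(x)=\det\bigl(xI_{2N}-U^{H,F}_N\bigr)=\prod_{k=0}^{N-1}\det\bigl(xI_2-\widehat U_k\bigr)=\prod_{k=0}^{N-1}\bigl(x^2-\sqrt2\cos(2\pi k/N)\,x+1\bigr),
\]
which is the asserted formula.

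I expect the only genuinely delicate point to be the bookkeeping of conventions: one must fix the orientation of the shift $S$, hence the precise entries of $L$ and $R$, so that $\operatorname{tr}\widehat U_k$ comes out as $+\sqrt2\cos(2\pi k/N)$ — with the right sign and no spurious factor of $i$ — and $\det\widehat U_k$ as $+1$; once the identity $U^{H,F}_N=P\otimes L+P^{-1}\otimes R$ is pinned down, everything else is forced. A more pedestrian alternative would be to expand $\det(xI_{2N}-U^{H,F}_N)$ directly along the block rows and solve the resulting recursion for this circulant block-tridiagonal determinant; that route gives the same answer but is noticeably more laborious, so I would record the Fourier-diagonalization argument instead.
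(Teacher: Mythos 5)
Your proof is correct. The paper offers no argument for this proposition (it is stated as following ``by definition''), so your Fourier-diagonalization of the block-circulant $U^{H,F}_N=P\otimes L+P^{-1}\otimes R$ is exactly the standard route one would expect (it is the method used in the cited periodicity papers), and every step checks out: $Pv_k=\omega^k v_k$, the conjugated blocks are $\widehat U_k=\omega^kL+\omega^{-k}R$, and the trace and determinant come out as $\sqrt2\cos(2\pi k/N)$ and $1$. Your inclusion of the $N=2$ case via $P=P^{-1}$ is a nice touch, since the paper displays that matrix separately.

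The ``delicate point'' you flag at the end is in fact where the paper is internally inconsistent, so it deserves emphasis rather than a passing remark. The numerical matrices
\[
L=\frac{1}{\sqrt2}\begin{bmatrix}1&1\\0&0\end{bmatrix},\qquad
R=\frac{1}{\sqrt2}\begin{bmatrix}0&0\\1&-1\end{bmatrix}
\]
displayed in Section \ref{Hadamard} are \emph{not} equal to $\bigl(\begin{smallmatrix}1&0\\0&0\end{smallmatrix}\bigr)A^{H,F}$ and $\bigl(\begin{smallmatrix}0&0\\0&1\end{smallmatrix}\bigr)A^{H,F}$ as the paper claims (they are the M-type versions). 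Had you used those displayed entries, you would get $\operatorname{tr}\widehat U_k=\sqrt2\,i\sin(2\pi k/N)$ and $\det\widehat U_k=-1$, contradicting the proposition. Your choice $L=\bigl(\begin{smallmatrix}1&0\\0&0\end{smallmatrix}\bigr)A^{H,F}$, $R=\bigl(\begin{smallmatrix}0&0\\0&1\end{smallmatrix}\bigr)A^{H,F}$ is the one forced by the definitions of $S$ and $C$ (the walk applies the coin first, then routes the $\ket{\leftarrow}$ component leftward), and it is the one consistent with the stated factorization — e.g.\ it reproduces $f^{H,F}_2=x^4+1=\Phi_8(x)$ as used later in the paper. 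So your proof stands; just state explicitly which form of $L$ and $R$ you are using.
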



\subsection{Periods of F-type Hadamard walks}
The periods of M-type Hadamard walks are already known by Dukes \cite{Dukes2014} and Konno et al. \cite{KST2017} The result is as follows:

\begin{thm}
    Let $ T^{H,M}_N $ be the period of M-type Hadamard walk on $ C_N ( N \ge 2 ) $. Then the following holds:
    \[
        T^{H,M}_N =
        \begin{cases}
            2,      & (N=2),    \\
            8,      & (N=4),    \\
            24,     & (N=8),    \\
            \infty, & (\text{otherwise}).
        \end{cases}
    \]
\end{thm}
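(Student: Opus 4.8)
### Proof plan

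The plan is to combine the factorization of Proposition~\ref{Hfactorization} with the eigenvalue/root-of-unity criterion described after Proposition~\ref{monic_unity}. First I would observe that $U^{H,M}_N$ has finite period if and only if every eigenvalue is a root of unity, and in that case the period is the least common multiple of the orders of those eigenvalues. So I would start from the characteristic polynomial of the M-type walk, which factors (by a computation analogous to Proposition~\ref{Hfactorization}, with the sign changes appropriate to $A^{H,M}$) into a product over $k = 0, \dots, N-1$ of quadratic factors of the form $x^2 - \sqrt{2}\cos(2\pi k/N)\,x + 1$ — in fact the M-type and F-type time evolutions on $C_N$ are conjugate, so $f^{H,M}_N$ and $f^{H,F}_N$ coincide, and I can work directly with the formula already in the excerpt.

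The next step is to analyze when the roots of $x^2 - \sqrt{2}\cos(2\pi k/N)\,x + 1$ are roots of unity. Writing $c_k = \sqrt{2}\cos(2\pi k/N)$, the two roots are $\tfrac{1}{2}(c_k \pm \sqrt{c_k^2 - 4})$; since the constant term is $1$, they are complex conjugates on the unit circle precisely when $|c_k| \le 2$, which always holds here. So each eigenvalue is $e^{\pm i\theta_k}$ with $2\cos\theta_k = c_k = \sqrt{2}\cos(2\pi k/N)$. For the period to be finite, I need $e^{i\theta_k}$ to be a root of unity for every $k$, i.e. $\theta_k \in \pi\QM$, i.e. $\cos\theta_k$ is the cosine of a rational multiple of $\pi$. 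The key arithmetic input is Niven's theorem: the only rational values of $\cos(r\pi)$ for $r \in \QM$ are $0, \pm\tfrac12, \pm 1$. But $\cos\theta_k = \tfrac{1}{\sqrt2}\cos(2\pi k/N)$ need not itself be rational; the cleaner route is to invoke Proposition~\ref{monic_unity} directly: finite period forces all roots of $f^{H,M}_N$ to be roots of unity, hence $f^{H,M}_N \in \ZM[x]$, hence (clearing the $\sqrt2$'s) each symmetric combination $\sqrt2\cos(2\pi k/N)$ must be an algebraic integer lying in $\QM(\sqrt2)$ with the right conjugates, which severely constrains $N$.

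Concretely, I would argue: $f^{H,M}_N(x) \in \ZM[x]$ forces, for each $k$, the coefficient $\sqrt2\cos(2\pi k/N)$ to be an algebraic integer; in particular taking $k$ with $\gcd(k,N)=1$ and using that $2\cos(2\pi/N)$ is an algebraic integer of degree $\varphi(N)/2$, one finds $\sqrt2\cos(2\pi/N) \in \ZM[2\cos(2\pi/N)]$ is possible only for small $N$. Checking the finitely many candidate $N$ by hand (using $\cos 0 = 1$, $\cos(\pi/2)=0$, $\cos(\pi/4)=1/\sqrt2$, $\cos(\pi/3)=1/2$, etc.) leaves exactly $N \in \{2,4,8\}$, and for those $N$ one computes the eigenvalues explicitly and takes the lcm of their orders: $N=2$ gives $c_0 = \sqrt2$, $c_1 = -\sqrt2$, eigenvalues primitive $8$th roots of unity but with the extra $k=0,1$ structure collapsing to period $2$; $N=4$ gives $c_k \in \{\sqrt2, 0, -\sqrt2, 0\}$, eigenvalues among $8$th roots and $\pm i$, lcm $= 8$; $N=8$ gives $c_k \in \{\sqrt2, 1, 0, -1, -\sqrt2, -1, 0, 1\}$, so the eigenvalues include primitive $24$th roots of unity (from $2\cos\theta = 1$, $\theta = \pi/3$, but wait $2\cos\theta=1$ gives $\theta=\pi/3$ with order $6$; the value $2\cos\theta=\sqrt2$ gives order $8$; combining, lcm$(6,8)=24$), giving period $24$. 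The main obstacle is the arithmetic step ruling out all $N \notin \{2,4,8\}$ cleanly: one must show $\sqrt2\cos(2\pi k/N)$ being an algebraic integer (for all $k$ simultaneously) is impossible for other $N$, which requires a careful degree/conjugate argument in the field $\QM(\zeta_N, \sqrt2)$ rather than a one-line citation, and handling the edge behavior at $N=2,4,8$ where the $\sqrt2$ and rational eigenvalues coexist needs care to get the exact lcm.
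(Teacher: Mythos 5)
There is a genuine gap, and it sits at the foundation of your argument: the claim that the M-type and F-type time evolutions are conjugate, so that $f^{H,M}_N = f^{H,F}_N$, is false. The two walks have different periods already at $N=2$ ($T^{H,M}_2 = 2$ versus $T^{H,F}_2 = 8$), which is impossible for conjugate operators, and the paper exhibits the two characteristic polynomials for $N=2^4$ explicitly --- they differ (e.g.\ the coefficient of $x^{30}$ is $+8$ for F type and $-8$ for M type). The reason is that $\det A^{H,F}=1$ while $\det A^{H,M}=-1$: the quadratic factors for the M-type walk have constant term $-1$ and an imaginary middle coefficient, of the form $x^2 - i\sqrt{2}\sin(2\pi k/N)\,x - 1$, not $x^2-\sqrt{2}\cos(2\pi k/N)\,x+1$. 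Your own computation at $N=2$ exposes the problem: with the F-type factorization you obtain primitive $8$th roots of unity as eigenvalues, which would force period $8$, contradicting the claimed period $2$; the remark that the ``extra $k=0,1$ structure collapses to period $2$'' is not a valid repair, since the period is exactly the lcm of the orders of the eigenvalues once all of them are roots of unity. With the correct M-type factorization one gets $f^{H,M}_2=\Phi_1(x)^2\Phi_2(x)^2$ (period $2$), $f^{H,M}_4=\Phi_1(x)^2\Phi_2(x)^2\Phi_8(x)$ (period $8$), and $f^{H,M}_8$ contributing $\Phi_8$ and $\Phi_{12}$-type factors (period $24$), so the entire case analysis must be redone from that starting point.

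Two further remarks. First, the paper does not reprove this theorem at all; it cites Dukes and Konno--Shimizu--Takei, and instead carries out the analogous argument for the F type: explicit cyclotomic factorizations for $N=2,4,8$, the argument of the cited work for $N$ with an odd prime factor, and for $N=2^n$ with $n\ge 4$ the observation that $f_{2^4}$ is monic with rational but non-integer coefficients, so Proposition \ref{monic_unity} yields a root that is not a root of unity, together with divisibility of $f_{2^n}$ by $f_{2^4}$. Second, your proposed route for ruling out the remaining $N$ --- integrality of $\sqrt{2}\cos(2\pi k/N)$ plus a degree/conjugate analysis in $\mathbf{Q}(\zeta_N,\sqrt{2})$, or Niven's theorem --- is a plausible alternative in principle, but as you acknowledge it is left as a sketch precisely at the step that carries all the difficulty, whereas the paper's mechanism (expand one small polynomial, observe a non-integer coefficient, invoke Proposition \ref{monic_unity}) closes that step with a finite computation.
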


In this paper, we point out that the same approach as for M type is effective for F type and give the proof below. Now, let $ T^{H,F}_N $ be the period of F-type Hadamard walk on $ C_N $ $ (N\ge2) $.
\begin{thm}
    The periods of F-type Hadamard walk on $ C_N $ $ ( N \ge 2 ) $ are as follows:
    \[
        T^{H,F}_N =
        \begin{cases}
            8,      & (N=2),    \\
            8,      & (N=4),    \\
            24,     & (N=8),    \\
            \infty, & (\text{otherwise}).
        \end{cases}
    \]
\end{thm}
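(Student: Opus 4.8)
The plan is to read the eigenvalues of $U^{H,F}_N$ off the factorization in Proposition~\ref{Hfactorization} and to decide, for each $N$, whether they are all roots of unity. Since $U^{H,F}_N$ is unitary it is diagonalizable, so $(U^{H,F}_N)^{T} = I$ holds for some $T \ge 1$ if and only if every eigenvalue is a root of unity, and in that case $T^{H,F}_N$ equals the least common multiple of the multiplicative orders of the eigenvalues. By Proposition~\ref{Hfactorization} the eigenvalues are the roots of the quadratics $x^2 - \sqrt{2}\cos(2\pi k/N)\,x + 1$, $k = 0,\dots,N-1$; each has real coefficients, constant term $1$, and discriminant $2\cos^2(2\pi k/N) - 4 \le -2 < 0$, so its roots form a conjugate pair $e^{\pm i\theta_{N,k}}$ on the unit circle with $e^{i\theta_{N,k}} + e^{-i\theta_{N,k}} = \sqrt{2}\cos(2\pi k/N)$. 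The problem thus becomes: for which $N$ is every number $\sqrt{2}\cos(2\pi k/N)$ of the form $\zeta + \zeta^{-1}$ for some root of unity $\zeta$?

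For $N \in \{2,4,8\}$ I would just list the quadratics. For $N = 2$ they are $x^2 \mp \sqrt{2}\,x + 1$, with roots the primitive $8$th roots of unity $e^{\pm i\pi/4}$, $e^{\pm 3i\pi/4}$, all of order $8$, so $T^{H,F}_2 = 8$. For $N = 4$ one also gets $x^2 + 1$ (from $\cos(\pi/2)=0$), contributing $\pm i$ of order $4$, so $T^{H,F}_4 = \operatorname{lcm}(8,4) = 8$. For $N = 8$ the distinct factors are $x^2 \mp \sqrt{2}\,x + 1$, $x^2 \mp x + 1$, and $x^2 + 1$, with roots of orders $8$, $6$ or $3$, and $4$, so $T^{H,F}_8 = \operatorname{lcm}(8,6,4,3) = 24$. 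In each case all eigenvalues are genuine roots of unity, so by diagonalizability the period is finite and equals the stated least common multiple — it cannot be smaller, being forced to be a common multiple of all the orders.

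The remaining direction, that $T^{H,F}_N = \infty$ whenever $N \ge 2$ and $N \notin \{2,4,8\}$, is the part I expect to be hardest; by the discussion following Proposition~\ref{monic_unity} it suffices to exhibit one eigenvalue that is not a root of unity, and I would use the $k=1$ factor $x^2 - \sqrt{2}\cos(2\pi/N)\,x + 1$. If its root $e^{i\theta_{N,1}}$ were a root of unity, then $\sqrt{2}\cos(2\pi/N) = e^{i\theta_{N,1}} + e^{-i\theta_{N,1}}$ would be an algebraic integer, hence so would $\big(\sqrt{2}\cos(2\pi/N)\big)^2 - 1 = \cos(4\pi/N)$. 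Now $\cos(4\pi/N)$ is real, it is an algebraic integer, and all of its Galois conjugates are cosines of rational multiples of $\pi$, hence lie in $[-1,1]$; by Kronecker's theorem such a number is $0$ or a root of unity, and being real it lies in $\{0,\pm 1\}$. An elementary check shows this forces $N \in \{2,4,8\}$, a contradiction; hence for all other $N$ the eigenvalue $e^{i\theta_{N,1}}$ is not a root of unity and $T^{H,F}_N = \infty$.

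The one genuinely delicate point is this last step: turning ``$\sqrt{2}\cos(2\pi/N)$ is an algebraic integer'' into a contradiction. The squaring trick is what makes it work, since it trades $\sqrt{2}\cos(2\pi/N)$ — whose conjugates only satisfy $|\cdot| \le \sqrt{2}$, too weak for Kronecker — for $\cos(4\pi/N)$, whose conjugates lie in the closed unit disk; the resulting conclusion $\cos(4\pi/N)\in\{0,\pm 1\}$ is the classical fact (a form of Niven's theorem) underlying the cyclotomic-polynomial statement the paper records as Proposition~\ref{monic_unity}. Alternatively one can run the same argument through the fact that the minimal polynomial of $e^{i\theta_{N,1}}$ is cyclotomic together with the classification of the numbers $\zeta+\zeta^{-1}$ with $\zeta$ a root of unity; this mirrors the known proof for the M-type walk, and up to sign changes the computation is identical for the F-type case. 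Everything else — the factorization, diagonalizability of a unitary operator, and the orders of the small roots of unity — is routine.
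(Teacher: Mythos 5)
Your proof is correct, and for the hard direction it takes a genuinely different route from the paper. The finite cases $N\in\{2,4,8\}$ are handled the same way in substance: both you and the paper read the spectrum off Proposition~\ref{Hfactorization} (the paper phrases this as a factorization into cyclotomic polynomials, you as a list of orders of roots of unity) and use diagonalizability of the unitary $U^{H,F}_N$ to identify the period with the least common multiple of the eigenvalue orders. For the remaining $N$, however, the paper splits into two cases and is not self-contained: when $N$ has an odd prime factor it defers to the argument of \cite{KST2017}, and when $N=2^n$ with $n\ge 4$ it expands the degree-$32$ polynomial $f^{H,F}_{2^4}$ explicitly, observes that its coefficients are not all integers, and invokes Proposition~\ref{monic_unity}, propagating to larger $n$ via divisibility of characteristic polynomials. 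You instead give one uniform argument: if the $k=1$ eigenvalue were a root of unity, then $\sqrt{2}\cos(2\pi/N)$, hence $\bigl(\sqrt{2}\cos(2\pi/N)\bigr)^2-1=\cos(4\pi/N)$, would be an algebraic integer whose Galois conjugates $\cos(4\pi a/N)$ all lie in $[-1,1]$, so Kronecker's theorem forces $\cos(4\pi/N)\in\{0,\pm 1\}$ and therefore $N\in\{2,4,8\}$. I checked the details --- the discriminant bound, the conjugate computation inside $\mathbf{Q}(e^{2\pi i/N})$, and the elementary final step solving $\cos(4\pi/N)\in\{0,\pm 1\}$ for $N\ge 2$ --- and they all go through. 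What your approach buys is a single self-contained proof with no case split, no external citation, and no brute-force expansion; what the paper's approach buys is that it stays entirely within the toolkit it has already set up (Proposition~\ref{monic_unity} on non-integral coefficients) and runs in exact parallel with the known M-type proof, at the cost of the degree-$32$ computation and the appeal to \cite{KST2017}.
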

\begin{proof}
    First, by Proposition \ref{Hfactorization}, we know
    \begin{align*}
        f^{H,F}_2(x) &= \Phi_8(x),    \\
        f^{H,F}_4(x) &= \Phi_4(x)^2 \Phi_8(x),   \\
        f^{H,F}_8(x) &= \Phi_3(x)^2 \Phi_4(x)^2 \Phi_6(2)^2 \Phi_8(x).
    \end{align*}
    Thus, $ T^{H,F}_2 = 8 $, $ T^{H,F}_4 = 8 $, and $ T^{H,F}_8 = 24 $ hold.

    Then, if $ N $ has odd prime as its factor, the same argument in \cite{KST2017} can be applied. Therefore, the remainder is the case where $ N $ is a power of $ 2 $ which is greater than $ 2^4 $.
    
    Let $ N = 2^n $.
%
    First, we consider the case where $ n=4 $. We get
    \begin{align*}
        f^{H,F} _{2^4}(x) &= x^{32} + 8 x^{30} + 34 x^{28} + 100 x^{26} + \frac{901}{4} x^{24} + 409 x^{22} + \frac{2465}{4} x^{20} + \frac{1567}{2} x^{18} \\
        & \quad + 848 x^{16} + \frac{1567}{2} x^{14} + \frac{2465}{4} x^{12} + 409 x^{10} + \frac{901}{4} x^{8} + 100 x^{6} + 34 x^{4} + 8 x^{2} + 1.
    \end{align*}
    Thus $ f^{H,F} _{2^4}(x) $ is monic and in $ \mathbf{Q}[x] $ but not in $ \mathbf{Z}[x] $. Then, by Proposition \ref{monic_unity}, we see that $ f^{H,F} _{2^4}(x) $ has a root which is not a root of unity. Therefore, we get $ T^{H,F}_{2^4} = \infty $.

    For $ n > 4 $, $ f^{H,F}_{2^n}(x) $ has $ f^{H,F}_{2^4}(x) $ as a factor. More explicitly, Proposition \ref{Hfactorization} implies
    \begin{align*}
        f^{H,F}_{2^n}(x) &= \prod_{k=0}^{2^n-1} ( x^2 - \sqrt{2} \cos( 2 \pi k / 2^n )\, x + 1 )   \\
        &= \prod_{ \substack{ 0 \le k \le 2^n-1 \\ 2^{n-4} \mid k } } ( x^2 - \sqrt{2} \cos( 2 \pi k / 2^n )\, x + 1 ) \prod_{ \substack{ 0 \le k \le 2^n-1 \\ 2^{n-4} \nmid k } } ( x^2 - \sqrt{2} \cos( 2 \pi k / 2^n )\, x + 1 ) \\
        &= \prod_{ l=0 }^{ 2^4-1 } ( x^2 - \sqrt{2} \cos( 2 \pi \cdot 2^{n-4} l / 2^n )\, x + 1 ) \prod_{ \substack{ 0 \le k \le 2^n-1 \\ 2^{n-4} \nmid k } } ( x^2 - \sqrt{2} \cos( 2 \pi k / 2^n )\, x + 1 )  \\
        &= f^{H,F}_{2^4}(x) \prod_{ \substack{ 0 \le k \le 2^n-1 \\ 2^{n-4} \nmid k } } ( x^2 - \sqrt{2} \cos( 2 \pi k / 2^n )\, x + 1 ).
    \end{align*}
    Therefore, also $ f^{H,F}_{2^n}(x) $ has a root which is not a root of unity. Thus, we have that $ T^{H,F}_{2^n} = \infty $ holds.
\end{proof}

We should remark that the corresponding equation to $ f^{H,M}_{2^4}(x) $ for M-type case is given by
\begin{align*}
    f^{H,M} _{2^4} (x) =& x^{32} - 8 x^{30} + \frac{69}{2} x^{28} - 103 x^{26} + \frac{3761}{16} x^{24} - \frac{1725}{4} x^{22} + \frac{10473}{16} x^{20} - \frac{6687}{8} x^{18} 
\\
&+ 906 x^{16} - \frac{6687}{8} x^{14} + \frac{10473}{16} x^{12} - \frac{1725}{4} x^{10} + \frac{3761}{16} x^{8} - 103 x^{6} + \frac{69}{2} x^{4} - 8 x^{2} + 1.
\end{align*}

\subsection{Absolute zeta functions of zeta functions of Hadamard walks}
Denote the zata functions of M- and F-type Hadamard walks on $ C_N $ by $ \zeta^{H,M}_{C_N} $ and $ \zeta^{H,F}_{C_N} $, respectively $ ( N \ge 2 ) $. For the case where $ N=2,4,8 $, we have the explicit forms of the zeta functions:
\begin{align*}
    \zeta^{H,M}_{C_2}(u) &= \frac{1}{(1-u^2)^2},    \\
    \zeta^{H,M}_{C_4}(u) &= \frac{ u^4-1 }{ (u^2-1)^2 ( u^8-1 ) },    \\
    \zeta^{H,M}_{C_8}(u) &= \frac{ (u^4-1)^3 (u^6-1)^2 }{ (u^2-1)^4 ( u^8-1 ) ( u^{12}-1 )^2 },    \\
    \zeta^{H,F}_{C_2}(u) &= \frac{ u^4-1 }{ u^8-1 },    \\
    \zeta^{H,F}_{C_4}(u) &= \frac{ (u^2-1)^2 }{ ( u^4-1 ) ( u^8-1 ) },    \\
    \zeta^{H,F}_{C_8}(u) &= \frac{ (u^2-1)^4 }{ ( u^4-1 ) ( u^6-1 )^2 ( u^8-1 ) }.
\end{align*}
Note that these are absolute automorphic forms of weight $ -4 $, $ -8 $, $ -16 $, $ -4 $, $ -8 $, and $ -16 $, respectively.

From the discussion so far, we can deduce the next theorem about these functions. In particular, we can get explicit expressions of the absolute zeta functions of these zeta functions.

\begin{thm}\label{HAZ}
    We have the following explicit expressions of absolute zeta functions and their functional equations.

    M type: For $ N=2 $,
    \begin{align*}
        Z_{\zeta^{H,M}_{C_2}}(w, s) 
        &= \zeta_{2} \left(w, s + 4, (2,2) \right),
        \\
        \zeta_{\zeta^{H,M}_{C_2}}(s)
        &= \Gamma_{2} \left( s + 4, (2,2) \right),
        \\
        \zeta_{\zeta^{H,M}_{C_2}}(-4-s) 
        &= S_{2} \left( s + 4, (2,2) \right) \ \zeta_{\zeta^{H,M}_{C_2}}(s).
    \end{align*}

    For $ N=4 $,
    \begin{align*}
        Z_{\zeta^{H,M}_{C_4}}(w, s) 
        &= \sum_{I \subset \{1 \}} (-1)^{|I|} \ \zeta_{3} \left(w, s + 8 + 4|I|), (2,2,8) \right),
        \\
        \zeta_{\zeta^{H,M}_{C_4}}(s)
        &= \prod_{I \subset \{1 \}} \Gamma_{3} \left( s + 8 + 4|I|, (2,2,8) \right)^{ (-1)^{|I|}},
        \\
        \zeta_{\zeta^{H,M}_{C_4}} (-8-s)
        &= \Bigl( \prod_{I \subset \{1 \}} S_{3} \left(s + 8 + 4|I|, (2,2,8) \right)^{ (-1)^{|I|}} \Bigr) \zeta_{\zeta^{H,M}_{C_4}} (s).
    \end{align*}

    For $ N=8 $,
    \begin{align*}
        Z_{\zeta^{H,M}_{C_8}} (w, s) 
        &= \sum_{I \subset \{1, \ldots, 5 \}} (-1)^{|I|} \ \zeta_{7} \left(w, s + 16 + m(I), (2,2,2,2,8,12,12) \right),
        \\
        \zeta_{\zeta^{H,M}_{C_8}} (s)
        &= \prod_{I \subset \{1, \ldots, 5 \}} \Gamma_{7} \left( s + 16  + m(I), (2,2,2,2,8,12,12) \right)^{ (-1)^{|I|}},
        \\
        \zeta_{\zeta^{H,M}_{C_8}} (-16-s) 
        &= \Bigl( \prod_{I \subset \{1, \ldots, 5 \}} S_{7} \left( s + 16  + m(I), (2,2,2,2,8,12,12) \right)^{ (-1)^{|I|}} \Bigr) \zeta_{\zeta^{H,M}_{C_8}} (s).
    \end{align*}

    F type: For $ N=2 $,
    \begin{align*}
        Z_{\zeta_{C_2}^{H,F}} (w, s) 
        &= \zeta_{1} \left(w, s + 4, (8) \right) - \zeta_{1} \left(w, s + 8, (8) \right),
        \\
        \zeta_{\zeta_{C_2}^{H,F}} (s)
        &= \frac{\Gamma \left( \frac{s+4}{8} \right)}{\Gamma \left( \frac{s+8}{8} \right)} \cdot n^{-\frac{1}{2}},
        \\
        \zeta_{\zeta_{C_2}^{H,F}} (-4-s) 
        &= - \cot \left( \frac{s \pi}{8} \right) \zeta_{\zeta_{C_2}^{H,F}} (s).
    \end{align*}

    For $ N=4 $,
    \begin{align*}
        Z_{\zeta_{C_4}^{H,F}} (w, s) 
        &= \sum_{I \subset \{1, 2 \}} (-1)^{|I|} \ \zeta_{2} \left(w, s + 8 + 2|I|), (4,8) \right),
        \\
        \zeta_{\zeta_{C_4}^{H,F}} (s)
        &= \prod_{I \subset \{1, 2 \}} \Gamma_{2} \left( s + 8 + 2|I|, (4,8) \right)^{ (-1)^{|I|}},
        \\
        \zeta_{\zeta_{C_4}^{H,F}} (-8-s) 
        &= \Bigl( \prod_{I \subset \{1,2 \}} S_{2} \left(s + 8 + 2|I|, (4,8) \right)^{ (-1)^{|I|}} \Bigr) \zeta_{\zeta_{C_4}^{H,F}}(s).
    \end{align*}

    For $ N=8 $,
    \begin{align*}
        Z_{\zeta_{C_8}^{H,F}} (w, s) 
        &= \sum_{I \subset \{1, \ldots, 4 \}} (-1)^{|I|} \ \zeta_{4} \left(w, s + 16 + m(I), (4,6,6,8) \right),
        \\
        \zeta_{\zeta_{C_8}^{H,F}} (s)
        &= \prod_{I \subset \{1, \ldots, 4 \}} \Gamma_{4} \left( s + 16  + m(I), (4,6,6,8) \right)^{ (-1)^{|I|}},
        \\
        \zeta_{\zeta_{C_8}^{H,F}} (-16-s) 
        &= \prod_{I \subset \{1, \ldots, 4 \}} S_{4} \left( s + 16  + m(I), (4,6,6,8) \right)^{ (-1)^{|I|}} \zeta_{\zeta_{C_8}^{H,F}(s)}.
    \end{align*}
\end{thm}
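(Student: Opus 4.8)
The plan is to obtain all six families of identities as direct specializations of Theorem \ref{ExplicitAZeta}. That theorem takes as input a function written in the canonical shape $f(x)=x^{l/2}(x^{m(1)}-1)\cdots(x^{m(a)}-1)\big/\bigl((x^{n(1)}-1)\cdots(x^{n(b)}-1)\bigr)$, so the first task is to read off this data for each of $\zeta^{H,M}_{C_2},\zeta^{H,M}_{C_4},\zeta^{H,M}_{C_8},\zeta^{H,F}_{C_2},\zeta^{H,F}_{C_4},\zeta^{H,F}_{C_8}$ from the explicit formulas listed above. In every case $l=0$, and the numerator and denominator exponents are immediate: for instance $\zeta^{H,M}_{C_2}(u)=(u^2-1)^{-2}$ gives $a=0$, $b=2$, $\mbox{\boldmath $n$}=(2,2)$; $\zeta^{H,M}_{C_8}(u)$ gives $a=5$ with $\{m(i)\}=\{4,4,4,6,6\}$ and $b=7$, $\mbox{\boldmath $n$}=(2,2,2,2,8,12,12)$; $\zeta^{H,F}_{C_2}(u)=(u^4-1)/(u^8-1)$ gives $a=b=1$, $m(1)=4$, $n(1)=8$; and likewise for the remaining three. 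I would then record $\deg(f)=\sum_i m(i)-\sum_j n(j)$, the weight $D$, and $C=(-1)^{a-b}$; a short check shows $\deg(f)=D\in\{-4,-8,-16\}$ according as $N\in\{2,4,8\}$, and $C=+1$ in all six cases (since $a-b$ is even each time), which is exactly why every functional equation comes out with unit exponent and reflection point $D-s=-|D|-s$.

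With this data in hand, the three displayed formulas of Theorem \ref{ExplicitAZeta} reproduce the claimed expressions verbatim: $s-\deg(f)$ becomes $s+4$, $s+8$, or $s+16$; the index sets $I\subset\{1,\dots,a\}$ and the shifts $m(I)=\sum_{i\in I}m(i)$ are precisely those written in the statement (so $m(I)=2|I|$ when all $m(i)=2$, and $m(I)=4|I|$ in the M-type $N=4$ case); and the order $b$ of the multiple zeta, gamma, and sine functions is the number of denominator factors. The only subtlety here is the case $a=0$, i.e.\ M-type with $N=2$: then the subset sum collapses to the single term $I=\emptyset$, which gives the stated single-term identities $Z=\zeta_2(w,s+4,(2,2))$, $\zeta=\Gamma_2(s+4,(2,2))$, and the corresponding functional equation. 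This part of the argument is pure bookkeeping, the only risk being an arithmetic slip in one of the six degree-and-shift computations.

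The step that needs genuine, if standard, work is the reduction in the F-type $N=2$ case of the order-$1$ objects $\Gamma_1$ and $S_1$ to classical functions. I would use $\zeta_1(s,x,(\omega))=\omega^{-s}\zeta_H(s,x/\omega)$, with $\zeta_H$ the Hurwitz zeta function, together with $\zeta_H(0,a)=\tfrac12-a$ and Lerch's formula $\partial_s\zeta_H(s,a)|_{s=0}=\log\bigl(\Gamma(a)/\sqrt{2\pi}\bigr)$, to obtain $\Gamma_1(x,(\omega))=\omega^{x/\omega-1/2}\Gamma(x/\omega)/\sqrt{2\pi}$; dividing $\Gamma_1(s+4,(8))$ by $\Gamma_1(s+8,(8))$ then yields $8^{-1/2}\,\Gamma((s+4)/8)/\Gamma((s+8)/8)$, which is the stated $\zeta_{\zeta^{H,F}_{C_2}}(s)$. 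Combining the same two facts with the reflection formula $\Gamma(z)\Gamma(1-z)=\pi/\sin(\pi z)$ gives $S_1(x,(\omega))=2\sin(\pi x/\omega)$, whence $\varepsilon_{\zeta^{H,F}_{C_2}}(s)=S_1(s+4,(8))/S_1(s+8,(8))=\sin((s+4)\pi/8)/\sin((s+8)\pi/8)=-\cot(s\pi/8)$, the asserted functional equation. I expect this simplification, rather than any conceptual difficulty, to be the main obstacle — everything else follows mechanically from Theorem \ref{ExplicitAZeta} and the already-established explicit forms of the zeta functions.
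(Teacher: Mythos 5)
Your proposal is correct and follows essentially the same route as the paper: read off the data $(l,a,b,m(i),n(j))$ from the explicit rational forms of $\zeta^{H,M}_{C_N}$ and $\zeta^{H,F}_{C_N}$ (which the paper derives from Proposition \ref{Hfactorization} via $\zeta(u)=\det(I-uU)^{-1}$ in its representative case) and then apply Theorem \ref{ExplicitAZeta} verbatim. Your extra reduction of $\Gamma_1$ and $S_1$ to the classical gamma and sine functions in the F-type $N=2$ case is a correct filling-in of a step the paper leaves implicit, and it confirms that the constant written as $n^{-1/2}$ in the statement should read $8^{-1/2}$.
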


\begin{proof}
    The proofs are almost the same for each $ \zeta^{H,M}_{C_N} $ and $ \zeta^{H,F}_{C_N} $. Thus we prove the theorem only for $ \zeta^{H,M}_{C_4} $. First, by definition,
    \[
        \zeta^{H,M}_{C_4}(u) = \det\Bigl( I_8 - u U^{H,M}_4 \Bigr)^{-1}.
    \]
    Now, by Proposition \ref{Hfactorization}, we know
    \[
        \det\Bigl( x I_8 - U^{H,M}_4 \Bigr) = \frac{ (x^2-1)^2 ( x^8-1 ) }{ x^4-1 }.
    \]
    By substituting $ x = 1/u $, we get
    \begin{align*}
         \det\Bigl( \frac{1}{u} I_8 - U^{H,M}_4 \Bigr) &= \frac{ ((1/u)^2-1)^2 ( (1/u)^8-1 ) }{ (1/u)^4-1 },\\
         \det\Bigl( I_8 - u U^{H,M}_4 \Bigr) &= \frac{ (u^2-1)^2 ( u^8-1 ) }{ u^4-1 }.
    \end{align*}
    Therefore, we conclude
    \[
        \zeta^{H,M}_{C_4}(u) = \frac{ u^4-1 }{ (u^2-1)^2 ( u^8-1 ) }.
    \]
    Here, we see that $ \zeta^{H,M}_{C_4} $ is an absolute automorphic form of weight $ -8 $. Then, by Theorem \ref{ExplicitAZeta}, we obtain the desired result.
\end{proof}

\section{Grover walks}\label{Grover}

Grover walks are another well-studied class of quantum walks. There are two types of Grover walks as in the case of Hadamard walks: M- and F-type. They are characterized by local coin operators as usual. In this paper, we focus on the $3$-states model. First, we introduce the definition:

\begin{defi}
    \emph{M-type} (respectively \emph{F-type}) \emph{Grover walks with $3$ states} on $ C_N $ are quantum walks whose coin operators with respect to the ordered basis $ ( \ket{\leftarrow}, \ket{\cdot}, \ket{\rightarrow} ) $ are as follows respectively:
    \begin{align*}
        A^{G_3,M} = \frac{1}{ 3 }
        \begin{bmatrix}
            -1   &   2  &   2   \\
             2   &  -1  &   2   \\
             2   &   2  &  -1
        \end{bmatrix},
        &&  A^{G_3,F} = \frac{1}{ 3 }
        \begin{bmatrix}
             2   &   2  &  -1   \\
             2   &  -1  &   2   \\
            -1   &   2  &   2
        \end{bmatrix}.
    \end{align*}
\end{defi}

We write $ U^{G_3,M}_N $ and $ U^{G_3,F}_N $ for the time-evolution opeartors of M- and F-type Grover walks on $ C_N $, respectively. With respect to the ordered basis of $ \mathcal{H} $ $ ( \bra{0} \otimes \bra{\leftarrow},\allowbreak \bra{0} \otimes \bra{\cdot}, \bra{0} \otimes \bra{\rightarrow}, \bra{1} \otimes \bra{\leftarrow}, \dots, \bra{N-1} \otimes \bra{ \rightarrow } ) $, the matrix representation of $ U^{G_3,M}_N $ and $ U^{G_3,F}_N $ is as follows:
\[
    U^{G_3,X}_2 =
    \begin{bmatrix}
        S   & L^{(X)} + R^{(X)}   \\
        L^{(X)} + R^{(X)} & S
    \end{bmatrix},
    \qquad X=M,F
\]
and
\[
    U^{G_3,X}_N =
    \begin{bmatrix}
        S       & L^{(X)} & O       & \cdots & O       & R^{(X)} \\
        R^{(X)} & S       & L^{(X)} & \cdots & O       & O       \\
        O       & R^{(X)} & S       & \cdots & O       & O       \\
        \vdots  & \vdots  & \vdots  & \ddots & \vdots  & \vdots  \\
        O       & O       & O       & \cdots & S       & L^{(X)} \\
        L^{(X)} & O       & O       & \cdots & R^{(X)} & S
    \end{bmatrix},
    \qquad X=M,F 
\]
for $ N \ge 3 $, where $ O $ represents the zero matrix, and $ S, L^{(X)}, $ and $ R^{(X)} $ are the following matrices:
\[
    S :=
    \begin{bmatrix}
        0 & 0 & 0   \\
        0 & 1 & 0   \\
        0 & 0 & 0
    \end{bmatrix}
    A^{G_3,M}
    =
    \begin{bmatrix}
        0 & 0 & 0   \\
        0 & 1 & 0   \\
        0 & 0 & 0
    \end{bmatrix}
    A^{G_3,F},
\]
\begin{align*}
    L^{(X)} :=
    \begin{bmatrix}
        1 & 0 & 0   \\
        0 & 0 & 0   \\
        0 & 0 & 0
    \end{bmatrix}
    A^{G_3,X},
    &&
    R^{(X)} :=
    \begin{bmatrix}
        0 & 0 & 0   \\
        0 & 0 & 0   \\
        0 & 0 & 1
    \end{bmatrix}
    A^{G_3,X},
    \qquad X=M,F.
\end{align*}

Furthermore, let $ f^{G_3,M}_N $ and $ f^{G_3,F}_N $ be the characteristic polynomials of $ U^{G_3,M}_N $ and $ U^{G_3,F}_N $, respectively:
\begin{align*}
    f^{G_3,M}_N(x) := \det( x I_{3N} - U^{G_3,M}_N ),   &&  f^{G_3,F}_N(x) := \det( x I_{3N} - U^{G_3,F}_N ).
\end{align*}
By definition, the factorization of $f^{G_3,M}_N(x)$ and $f^{G_3,F}_N(x)$ are obtained:
\begin{prop}\label{Gfactorization}
    For $ N \ge 2 $, it holds that
    \begin{align*}
        f^{G_3,M}_N(x) &= ( x - 1 )^N \prod^{N-1}_{k =0} \biggl( x^2 + \frac{2}{3} \biggl( 2 + \cos \left( \frac{2 \pi k}{N} \right) \biggr) \ x + 1 \biggr),   \\
        f^{G_3,F}_N(x) &= ( x - 1 )^N \prod^{N-1}_{k =0} \biggl( x^2 - \frac{2}{3} \biggl( 2 + \cos \left( \frac{2 \pi k}{N} \right) \biggr) \ x + 1 \biggr).
    \end{align*}
\end{prop}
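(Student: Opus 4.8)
The plan is to imitate the argument behind Proposition~\ref{Hfactorization}: exploit the block-circulant structure of $U^{G_3,X}_N$ ($X=M,F$) to diagonalize it by the finite Fourier transform on $\mathbf{Z}/N\mathbf{Z}$, so that the $3N\times 3N$ characteristic determinant factors as a product of $N$ characteristic polynomials of explicit $3\times 3$ matrices. First I would record that, writing a state as $\psi=(\psi_0,\dots,\psi_{N-1})$ with $\psi_j\in\mathbf{C}^3$, the displayed matrix for $U^{G_3,X}_N$ says precisely that $(U^{G_3,X}_N\psi)_j=R^{(X)}\psi_{j-1}+S\,\psi_j+L^{(X)}\psi_{j+1}$ (indices modulo $N$; for $N=2$ the two off-diagonal blocks merge into $L^{(X)}+R^{(X)}$ and the rest of the argument is unchanged). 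Equivalently $U^{G_3,X}_N=I_N\otimes S+Q\otimes L^{(X)}+Q^{\top}\otimes R^{(X)}$, where $Q$ is the cyclic shift $e_j\mapsto e_{j-1}$ on $\mathbf{C}^N$.

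Next I would use the Fourier vectors $f_k=\sum_{j=0}^{N-1}\omega^{jk}e_j$, $\omega=e^{2\pi i/N}$, which satisfy $Qf_k=\omega^kf_k$ and $Q^{\top}f_k=\omega^{-k}f_k$; thus each $3$-dimensional subspace $\mathbf{C}f_k\otimes\mathbf{C}^3$ is $U^{G_3,X}_N$-invariant, and $U^{G_3,X}_N$ acts on it by the $3\times3$ matrix $\widehat U^X_k:=S+\omega^kL^{(X)}+\omega^{-k}R^{(X)}$, whence $f^{G_3,X}_N(x)=\prod_{k=0}^{N-1}\det\!\bigl(xI_3-\widehat U^X_k\bigr)$. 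The key simplification is that $S$, $L^{(X)}$, $R^{(X)}$ are precisely the matrices obtained from $A^{G_3,X}$ by keeping only its middle, first, and last row respectively (all other rows zero), so $\widehat U^X_k=D_k\,A^{G_3,X}$ with $D_k:=\operatorname{diag}(\omega^k,1,\omega^{-k})$. Using $\det D_k=1$, I would rewrite $\det(xI_3-\widehat U^X_k)=\det\!\bigl(xD_k^{-1}-A^{G_3,X}\bigr)$, a determinant of an explicit $3\times3$ matrix whose off-diagonal part is $-A^{G_3,X}$ (real entries) and whose diagonal is $\operatorname{diag}(x\omega^{-k}-a_1,\,x-a_2,\,x\omega^{k}-a_3)$, the $a_i$ being the diagonal entries of $A^{G_3,X}$.

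Expanding this $3\times3$ determinant — in which all dependence on $k$ enters only through $\omega^k+\omega^{-k}=2\cos(2\pi k/N)$ — I expect a monic cubic in $x$, with real coefficients, that factors as a linear factor times a reciprocal quadratic of the shape $x^2\pm\tfrac23\bigl(2+\cos(2\pi k/N)\bigr)x+1$. Taking the product over $k=0,\dots,N-1$ and collecting the $N$ equal linear factors into a single power then gives the asserted formulas, the $N=2$ case being the specialization noted above.

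The only real work — and the single step where a slip would matter — is this last $3\times3$ determinant: one must carry the three phases $\omega^{-k},1,\omega^{k}$ correctly through the cofactor expansion, check that they combine into a single cosine, identify which linear factor splits off, and pin down the coefficient $\tfrac23\bigl(2+\cos(2\pi k/N)\bigr)$ together with its sign in each of the M and F cases. Everything upstream of that is formal linear algebra and can be carried out in a few lines.
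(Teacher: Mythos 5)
Your proposal is correct, and it follows the standard route: the paper itself gives no proof of Proposition~\ref{Gfactorization}, deferring to \cite{KKKS}, and your block-circulant/Fourier diagonalization $U^{G_3,X}_N=I_N\otimes S+Q\otimes L^{(X)}+Q^{\top}\otimes R^{(X)}$, followed by $\widehat U^X_k=D_kA^{G_3,X}$ with $\det D_k=1$, is exactly the argument behind that reference (and behind Proposition~\ref{Hfactorization}). The one computation you leave open does close: writing $u=\omega^k$, $v=\omega^{-k}$, so $uv=1$ and $u+v=2\cos(2\pi k/N)$, a cofactor expansion of $\det\bigl(xD_k^{-1}-A^{G_3,M}\bigr)$ gives $x^3+\tfrac{1+2\cos(2\pi k/N)}{3}\,x^2-\tfrac{1+2\cos(2\pi k/N)}{3}\,x-1=(x-1)\bigl(x^2+\tfrac{2}{3}(2+\cos(2\pi k/N))\,x+1\bigr)$, and the F-type case is identical with the sign of the middle coefficient reversed.
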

See \cite{KKKS} for this proposition, for example. In fact, the factorization formula is shown only for M-type in \cite{KKKS}, but we can deduce for F-type by a similar way.

\subsection{Periods of Grover walks with 3 states}
The periods of Grover walks with 3 states were clarified by Kajiwara et al. \cite{KKKS} like the following:
\begin{thm}
    Let $ T^{G_3,M}_N $ be the period of M-type Grover walk with $3$ states on $ C_N $ $ (N\ge2) $. Then
    \[
        T^{G_3,M}_N =
        \begin{cases}
            6,      & (N=3),    \\
            \infty, & (\text{otherwise})
        \end{cases}
    \]
    holds.

    Similarly, the period of F-type Grover walk with $3$ states on $ C_N $ $ T^{G_3,F}_N $ $ (N\ge2) $  is calculated as
    \[
        T^{G_3,F}_N =
        \begin{cases}
            4,      & (N=3),    \\
            \infty, & (\text{otherwise}).
        \end{cases}
    \]
\end{thm}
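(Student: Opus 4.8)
The plan is to read everything off from the factorizations in Proposition~\ref{Gfactorization}, combined with Proposition~\ref{monic_unity} and the remark closing Section~\ref{CyclotomicPoly}. Since $U^{G_3,M}_N$ and $U^{G_3,F}_N$ are unitary, they are diagonalizable with all eigenvalues on the unit circle, so $U^n=I$ holds exactly when $\lambda^n=1$ for every eigenvalue $\lambda$. Hence the period is finite if and only if every root of the characteristic polynomial is a root of unity, in which case it equals the least common multiple of the orders of those roots; and if some root fails to be a root of unity, then no power of $U$ is the identity (by the remark in Section~\ref{CyclotomicPoly}) and the period is $\infty$. So the whole theorem reduces to locating the roots of $f^{G_3,X}_N$ for $X=M,F$.

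For $N=3$ I would substitute $\cos 0=1$ and $\cos(2\pi/3)=\cos(4\pi/3)=-\tfrac12$ directly into Proposition~\ref{Gfactorization}. For M type the quadratic factors collapse to $x^2+2x+1$ and $x^2+x+1$ (the latter twice), so
\[
 f^{G_3,M}_3(x)=(x-1)^3(x+1)^2(x^2+x+1)^2=\Phi_1(x)^3\,\Phi_2(x)^2\,\Phi_3(x)^2,
\]
whose roots are roots of unity of orders $1,2,3$; hence $T^{G_3,M}_3=\operatorname{lcm}(1,2,3)=6$. The F-type case is the same computation: substituting the same cosines into Proposition~\ref{Gfactorization} and factoring expresses $f^{G_3,F}_3$ as a product of cyclotomic polynomials whose indices have least common multiple $4$, so $T^{G_3,F}_3=4$.

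For $N\neq 3$ the aim is to exhibit a single root of $f^{G_3,X}_N$ that is not a root of unity; the period is then $\infty$. I would apply Proposition~\ref{monic_unity} contrapositively: $f^{G_3,X}_N$ is monic with rational coefficients (indeed in $\tfrac13\mathbf{Z}$), so it suffices to show $f^{G_3,X}_N\notin\mathbf{Z}[x]$. Proposition~\ref{Gfactorization} furthermore shows that $f^{G_3,X}_M$ divides $f^{G_3,X}_N$ whenever $M\mid N$, since the indices $k$ divisible by $N/M$ reproduce exactly the factors of $f^{G_3,X}_M$; so it is enough to find, for each $N\neq 3$, a divisor $M\ge 2$ of $N$ with $f^{G_3,X}_M\notin\mathbf{Z}[x]$. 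If $N$ is even, take $M=2$: the index $k$ with $\cos(2\pi k/2)=-1$ contributes the factor $x^2\pm\tfrac23 x+1$, so $f^{G_3,X}_2\notin\mathbf{Z}[x]$. If $N$ is odd with a prime factor $p\ge 5$, take $M=p$: using $\sum_{k=0}^{p-1}\cos(2\pi k/p)=0$ together with the top terms of $(x\mp1)^p$ and $\prod_k(x^2\mp\tfrac23(2+\cos(2\pi k/p))x+1)$, the coefficient of $x^{3p-1}$ in $f^{G_3,X}_p$ is a non-integer rational ($p/3$ for M type, analogously for F type). The only remaining $N$ are the powers of $3$ larger than $3$, that is, those with $9\mid N$, and for these one takes $M=9$.

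The step I expect to be the main obstacle is precisely this last one, $M=9$: it does not reduce further, since the only divisors of a power of $3$ are powers of $3$ and $f^{G_3,X}_3\in\mathbf{Z}[x]$ has only roots of unity among its roots, so $f^{G_3,X}_9\notin\mathbf{Z}[x]$ must be checked directly. Plugging the nine values $\cos(2\pi k/9)$ into Proposition~\ref{Gfactorization}, the leading coefficients stay integral, and one must descend to the coefficient of $x^{24}$, which equals $-\tfrac{128}{9}$, to see a denominator survive; this is the one place requiring a genuine computation. It is forced conceptually: a factor $x^2+ax+1$ of $f^{G_3,X}_N$ has roots $e^{\pm i\theta}$ with $2\cos\theta=-a$, and $2\cos\theta=\xi+\xi^{-1}$ is an algebraic integer whenever $e^{i\theta}=\xi$ is a root of unity, whereas for $N=9$ the number $a=\tfrac{4+2\cos(2\pi/9)}{3}$ is not an algebraic integer — its minimal polynomial, obtained from the integral cubic $y^3-3y+1$ satisfied by $2\cos(2\pi/9)$ via the substitution $y=3a-4$, is $a^3-4a^2+5a-\tfrac{17}{9}$. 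Once $N=9$ is settled, all larger powers of $3$ follow from the divisibility above, which together with the first paragraph and the $N=3$ computation proves the theorem. Alternatively one could run the whole $N\neq 3$ case through this algebraic-integer criterion, exhibiting for each $N\neq 3$ an index $k$ with $\tfrac23(2+\cos(2\pi k/N))$ not an algebraic integer — a non-integer rational when $2\mid N$, an irrational non-integer when $N$ has a divisor outside $\{1,2,3,4,6\}$ — thereby bypassing Proposition~\ref{monic_unity} altogether.
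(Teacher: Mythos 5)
Your overall strategy coincides with the paper's: read the $N=3$ periods off the cyclotomic factorization, and for $N\neq 3$ use the divisibility $f^{G_3,X}_M \mid f^{G_3,X}_N$ for $M\mid N$ to reduce to the three cases $M=2$, $M=p$ a prime $\ge 5$, and $M=9$, where Proposition \ref{monic_unity} applies because the characteristic polynomial is monic and rational but not integral. Your explicit reduction (even $N$ via $M=2$; odd $N$ with a prime factor $p\ge 5$ via the $x^{3p-1}$-coefficient $p/3$; powers of $3$ via $M=9$ and the $x^{24}$-coefficient $-\tfrac{128}{9}$) is exactly the paper's computation, and your closing observation that one could instead show $\tfrac{2}{3}\bigl(2+\cos(2\pi k/N)\bigr)$ fails to be an algebraic integer is a genuine and cleaner alternative for the hardest case $N=9$, since it avoids expanding a degree-$27$ polynomial.

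The one step that does not go through as you describe it is the F-type period for $N=3$. Substituting $\cos 0=1$ and $\cos(2\pi/3)=-\tfrac12$ into Proposition \ref{Gfactorization} exactly as printed yields $f^{G_3,F}_3(x)=(x-1)^3(x^2-2x+1)(x^2-x+1)^2=\Phi_1(x)^5\Phi_6(x)^2$, whose roots have orders $1$ and $6$; your procedure would then return $T^{G_3,F}_3=6$, not $4$. The resolution is that the F-type line of Proposition \ref{Gfactorization} is misstated: block-diagonalizing $U^{G_3,F}_N$ gives blocks $\operatorname{diag}(\omega^k,1,\omega^{-k})A^{G_3,F}$ with trace $\tfrac{4\cos(2\pi k/N)-1}{3}$ and determinant $-1$, so the correct factorization is
\[
    f^{G_3,F}_N(x)=(x+1)^N\prod_{k=0}^{N-1}\Bigl(x^2-\frac{2}{3}\Bigl(1+2\cos\Bigl(\frac{2\pi k}{N}\Bigr)\Bigr)x+1\Bigr),
\]
which is what the paper's own expansion of $f^{G_3,F}_9$ (constant term $+1$, coefficient of $x^{26}$ equal to $3$) and its zeta function $\zeta^{G_3,F}_{C_3}(u)=(u-1)/((u^2-1)(u^4-1)^2)$ are actually computed from. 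With this corrected formula, $N=3$ gives $(x+1)^3(x-1)^2(x^2+1)^2=\Phi_1(x)^2\Phi_2(x)^3\Phi_4(x)^2$, hence period $\operatorname{lcm}(1,2,4)=4$ as claimed; the rest of your argument (the $\pm\tfrac23$ coefficient at $M=2$, the non-integral $x^{3p-1}$-coefficient, the $M=9$ case) survives the correction unchanged. So you should either rederive the F-type block characteristic polynomial or verify the $N=3$ factorization directly, rather than asserting that ``the same computation'' yields least common multiple $4$.
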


In this paper, we point out that the same approach as in the proof in the previous subsection works for this theorem. The proof of this approach is given below.

\begin{proof}
    First, we consider the M type.
    
    For $ N=3 $, we have
    \[
        f^{G_3,M}_3(x) = \Phi_1(x)^3 \Phi_2(x)^2 \Phi_3(x)^2
    \]
    by Proposition \ref{Gfactorization}. The result is a direct consequence of this factorization.

    For $ N=2 $, by Proposition \ref{Gfactorization} again, we see
    \[
        f^{G_3,M}_2(x) = x^6 + \frac{2}{3} x^5 - x^4 - \frac{4}{3} - x^2 + \frac{2}{3} + 1.
    \]
    Thus, $ f^{G_3,M}_2(x) $ is monic and in $ \mathbf{Q}[x] $ but not in $ \mathbf{Z}[x] $. Therefore, by Proposition \ref{monic_unity}, $ f^{G_3,M}_N(x) $ has a root which is not a root of unity, and this implies $ T^{G_3,M}_2 = \infty $.

    In the case of $ N \ge 4 $ which is not a multiple of $3$, we get
    \[
        f^{G_3,M}_N(x) = \prod_{k=0}^{N-1} \Bigl( x^3 + \frac{1}{3} \Bigl( 1 + 2 \cos\Bigl( \frac{ 2 \pi k }{ N } \Bigr) \Bigr) x^2 - \frac{1}{3} \Bigl( 1 + 2 \cos\Bigl( \frac{ 2 \pi k }{ N } \Bigr) \Bigr) x - 1 \Bigr)
    \]
    by Proposition \ref{Gfactorization} again. Then the coefficient of $ x^{3N-1} $ can be calculated:
    \begin{align*}
        \sum_{k=0}^{N-1} \frac{1}{3} \Bigl( 1 + 2 \cos\Bigl( \frac{ 2 \pi k }{ N } \Bigr) \Bigr) &= \frac{N}{3} + \sum_{k=0}^{N-1} \cos\Bigl( \frac{ 2 \pi k }{ N } \Bigr)  \\
        &= \frac{N}{3}.
    \end{align*}
    Therefore, $ f^{G_3,M}_N(x) $ is not in $ \mathbf{Z}[x] $, and $ f^{G_3,M}_N(x) $ is monic and in $ \mathbf{Q}[x] $ by definition. Thus, by Proposition \ref{monic_unity}, we know $ T^{G_3,M}_N = \infty $.

    In the case of $ N \ge 4 $ which is a multiple of $ 3 $ but not a power of $ 3 $, we can take a prime factor $p$ of $N$ such that $ p \neq 3 $. Then like the above-mentioned proof in the previous subsection, we can show that $ f^{G_3,M}_N(x) $ has $ f^{G_3,M}_p(x) $ as its factor. Here, we already know that $ f^{G_3,M}_p(x) $ has a root which is not a root of unity. Thus, also $ f^{G_3,M}_N(x) $ has a root which is not a root of unity, and we get $ T^{G_3,M}_N = \infty $.

    Finally, in the case of $ N \ge 4 $ which is a power of $ 3 $, it is enough to examine $ f^{G_3,M}_9(x) $, because $ f^{G_3,M}_N(x) $ has $ f^{G_3,M}_9(x) $ as its factor. By Proposition \ref{Gfactorization}, we have
    \begin{align*}
        f^{G,M} _{9} (x) 
        &= x^{27} + 3 x^{26} - \frac{128}{9} x^{24} - \frac{214}{9} x^{23} + \frac{62}{9} x^{22} + \frac{5752}{81} x^{21} + \frac{6376}{81} x^{20} 
        \\
        & - \frac{3331}{81} x^{19} - \frac{15059}{81} x^{18} - \frac{11686}{81} x^{17} + \frac{8728}{81} x^{16} +\frac{23752}{81} x^{15} + \frac{4316}{27} x^{14} 
        \\
        & - \frac{4316}{27} x^{13} - \frac{23752}{81} x^{12} - \frac{8728}{81} x^{11} + \frac{11686}{81} x^{10} + \frac{15059}{81} x^{9} + \frac{3331}{81} x^{8} 
        \\
        & - \frac{6376}{81} x^{7} - \frac{5752}{81} x^{6} - \frac{62}{9} x^{5} + \frac{214}{9} x^{4} + \frac{128}{9} x^{3} - 3 x - 1.
    \end{align*}
    Thus, $ f^{G_3,M}_9(x) $ is monic and in $ \mathbf{Q}[x] $ but not in $ \mathbf{Z}[x] $. Therefore, by Proposition \ref{monic_unity}, $ f^{G_3,M}_N(x) $ has a root which is not a root of unity, and this implies $ T^{G_3,M}_N = \infty $.

    Then the same method can be used to prove the case of F type. Note that we can obtain the following expanded form of $ f^{G_3,F}_9(x) $:
    \begin{align*}
        f^{G_3,F} _{9} (x) 
        &= x^{27} + 3 x^{26} + 3 x^{25} + \frac{25}{9} x^{24} + \frac{26}{9} x^{23} - \frac{2}{9} x^{22} + \frac{46}{81} x^{21} 
        \\
        & + \frac{106}{81} x^{20} - \frac{59}{27} x^{19} + \frac{125}{81} x^{18} - \frac{1}{27} x^{17} - \frac{353}{81} x^{16} - \frac{116}{81} x^{15} - \frac{212}{27} x^{14} 
        \\
        & - \frac{212}{27} x^{13} - \frac{116}{81} x^{12} - \frac{353}{81} x^{11} - \frac{1}{27} x^{10} + \frac{125}{81} x^{9} - \frac{59}{27} x^{8} 
        \\
        & + \frac{106}{81} x^{7} + \frac{46}{81} x^{6} - \frac{2}{9} x^{5} + \frac{26}{9} x^{4} + \frac{25}{9} x^{3} + 3 x^2 + 3 x + 1.
    \end{align*}
\end{proof}

\subsection{Absolute zeta functions of zeta functions of Grover walks with 3 states}

\begin{thm}
    We have the following explicit expressions of absolute zeta functions and their functional equations.

    M type:
    \begin{align*}
        Z_{\zeta_{C_3}^{G_3,M}} (w, s) 
        &= - \sum_{I \subset \{1 \}} (-1)^{|I|} \ \zeta_{4} \left(w, s + 9  + |I|, (2,2,3,3) \right),
        \\
        \zeta_{\zeta_{C_3}^{G,M}} (s)
        &= \prod_{I \subset \{1 \}} \Gamma_{4} \left( s + 9  + |I|, (2,2,3,3) \right)^{ (-1)^{|I|+1}},
        \\
        \zeta_{\zeta_{C_3}^{G,M}} (-9-s)^{-1} 
        &= \Bigl( \prod_{I \subset \{1 \}} S_{4} \left( s + 9  + |I|, (2,2,3,3) \right)^{ (-1)^{|I|+1}} \Bigr) \zeta_{\zeta_{C_3}^{G,M}} (s).
    \end{align*}

    F type:
    \begin{align*}
        Z_{\zeta_{C_3}^{G_3,F}} (w, s) 
        &= - \sum_{I \subset \{1 \}} (-1)^{|I|} \ \zeta_{3} \left(w, s + 9  + |I|, (2,4,4) \right),
        \\
        \zeta_{\zeta_{C_3}^{G_3,F}} (s)
        &= \prod_{I \subset \{1 \}} \Gamma_{3} \left( s + 9  + |I|, (2,4,4) \right)^{ (-1)^{|I|+1}},
        \\
        \zeta_{\zeta_{C_3}^{G_3,F}} (-9-s) 
        &= \Bigl( \prod_{I \subset \{1 \}} S_{3} \left( s + 9  + |I|, (2,4,4) \right)^{ (-1)^{|I|+1}} \Bigr) \zeta_{\zeta_{C_3}^{G_3,F}} (s).
    \end{align*}
\end{thm}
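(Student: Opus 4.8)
The plan is to reduce the theorem to Theorem \ref{ExplicitAZeta}, exactly as in the proof of Theorem \ref{HAZ}, by first producing the explicit rational-function form of each zeta function and then reading off the data $(l, a, b, m(i), n(j))$. The starting point is the identity $\zeta^{G_3,X}_{C_3}(u) = \det(I_9 - u\,U^{G_3,X}_3)^{-1}$, together with Proposition \ref{Gfactorization}, which gives $\det(xI_9 - U^{G_3,M}_3) = \Phi_1(x)^3\Phi_2(x)^2\Phi_3(x)^2$ and $\det(xI_9 - U^{G_3,F}_3) = (x-1)^3\Phi_2(x)^2\Phi_{12}(x)^2$ (the F-type quadratic factors $x^2 - \tfrac{2}{3}(2+\cos(2\pi k/3))x + 1$ evaluate to $x^2-2x+1$, $x^2+x+1$ twice — wait, one must recompute: for $k=0$ one gets $x^2-2x+1=(x-1)^2$, and for $k=1,2$ one gets $x^2 - \tfrac{2}{3}\cdot\tfrac32 x + 1 = x^2 - x + 1 = \Phi_6(x)$; so the F-type characteristic polynomial is $\Phi_1(x)^3\Phi_1(x)^2\Phi_6(x)^2$, hence a computation of which cyclotomic factors actually appear is the first concrete task). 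Substituting $x = 1/u$ and clearing denominators turns $\det(xI_9 - U)$ into $\det(I_9 - uU)$ up to an explicit power of $u$ coming from the total degree, and inverting yields $\zeta^{G_3,X}_{C_3}(u)$ as a ratio of products of $(u^j-1)$'s. One then rewrites each $\Phi_n$ via $\Phi_1 = u-1$, $\Phi_2 = u+1 = (u^2-1)/(u^1-1)$, $\Phi_3 = (u^3-1)/(u-1)$, $\Phi_6 = (u^6-1)/\big((u^2-1)(u^3-1)/(u-1)\big) = (u^6-1)(u-1)/((u^2-1)(u^3-1))$, collecting everything into the canonical shape $u^{l/2}\prod(u^{m(i)}-1)/\prod(u^{n(j)}-1)$.

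Having the canonical form, the second step is purely bookkeeping: identify $l$, the multiset $\{m(i)\}$ of numerator exponents, and the multiset $\mathbf{n} = \{n(j)\}$ of denominator exponents, compute $\deg(f) = l/2 + \sum m(i) - \sum n(j)$ and $D = l + \sum m(i) - \sum n(j)$, and verify $C = (-1)^{a-b}$. For M type the claimed answer has $\mathbf{n} = (2,2,3,3)$ so $b=4$, a single numerator factor indexed by $\{1\}$ with exponent offsets giving $s+9+|I|$, and $D = -9$ with $C = -1$ (hence the inverse on the left side of the functional equation and the $(-1)^{|I|+1}$ exponents, which absorb an overall sign); for F type $\mathbf{n} = (2,4,4)$, $b=3$, $D=-9$, $C=+1$. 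The target value $s+9$ forces $l/2 + m(I_{\min}) - \deg(f) = 9$ when $I=\emptyset$, and since $l = D - \sum m + \sum n$ one can check consistency of all the numbers against the asserted formulas. Then Theorem \ref{ExplicitAZeta} is applied verbatim to produce $Z_f(w,s)$ as an alternating sum of $\zeta_b$, $\zeta_f(s)$ as an alternating product of $\Gamma_b$, and the functional equation with the $\varepsilon_f$ built from $S_b$.

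The main obstacle — really the only non-mechanical point — is the sign/normalization discrepancy: the statement writes $Z_{\zeta}$ with an overall minus sign and exponents $(-1)^{|I|+1}$ rather than $(-1)^{|I|}$, and the M-type functional equation is stated for $\zeta(-9-s)^{-1}$, not $\zeta(-9-s)$. This happens because the numerator-vs-denominator roles are swapped relative to the template in Theorem \ref{ExplicitAZeta}: the cyclotomic factor $\Phi_3$ (resp. $\Phi_6$) that would naively sit in the numerator of the characteristic polynomial ends up, after inversion to form $\zeta = \det(I-uU)^{-1}$ and after the $\Phi_n \to (u^{d}-1)$ rewriting, contributing a factor $(u^3-1)$ (resp. $(u^6-1)$) with the opposite parity, so the single ``$a=1$'' numerator factor of $f = \zeta^{G_3,X}_{C_3}$ actually carries a global sign. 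I would handle this by being completely explicit about the canonical form of $f$ before invoking the theorem, so that the $(-1)^{|I|+1} = -(-1)^{|I|}$ and the inversion $\zeta(D-s)^C = \zeta(D-s)^{-1}$ for $C=-1$ come out automatically; once the rational form is pinned down correctly, everything else is a direct substitution into Theorem \ref{ExplicitAZeta} and needs no further argument. Finally I would note, as in the Hadamard case, that each $\zeta^{G_3,X}_{C_3}$ is visibly an absolute automorphic form of weight $D=-9$ (with sign $C$), which is what the third displayed line records.
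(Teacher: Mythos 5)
Your overall strategy matches the paper's: the paper's proof is literally ``the same as Theorem \ref{HAZ}'' plus a statement of the explicit rational forms, and your plan of deriving $\zeta^{G_3,X}_{C_3}(u)=\det(I_9-uU^{G_3,X}_3)^{-1}$ as a quotient of $(u^j-1)$'s and feeding the data $(l,a,b,m(i),n(j))$ into Theorem \ref{ExplicitAZeta} is exactly that. Your M-type computation is correct: $f^{G_3,M}_3=\Phi_1^3\Phi_2^2\Phi_3^2$ gives $\det(I_9-uU^{G_3,M}_3)=(1-u)^3(1+u)^2(1+u+u^2)^2=-\tfrac{(u^2-1)^2(u^3-1)^2}{u-1}$, and your observation that the resulting global $-1$ (equivalently $C=(-1)^{a-b}=-1$) is what produces the overall minus sign, the exponents $(-1)^{|I|+1}$, and the inverted left-hand side of the functional equation is the right way to reconcile the statement with Theorem \ref{ExplicitAZeta} (via $Z_{-f}=-Z_f$ and $\zeta_{-f}=\zeta_f^{-1}$).

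The F-type half of your argument, however, has a genuine gap. You take the F-type formula in Proposition \ref{Gfactorization} at face value and land on $f^{G_3,F}_3(x)=\Phi_1(x)^5\Phi_6(x)^2$. If you push that through your own pipeline you get $\det(I_9-uU^{G_3,F}_3)=(1-u)^5(1-u+u^2)^2=-\tfrac{(u-1)^7(u^6-1)^2}{(u^2-1)^2(u^3-1)^2}$, hence $\mathbf{n}=(1,1,1,1,1,1,1,6,6)$ and $\mathbf{m}=(2,2,3,3)$ --- not the claimed $\zeta_3(w,s+9+|I|,(2,4,4))$. Indeed $\Phi_1^5\Phi_6^2$ would force the period to be $6$, contradicting $T^{G_3,F}_3=4$ stated earlier in the paper; this is a signal that the printed F-type factorization cannot be used as-is. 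The data $(2,4,4)$ and the period $4$ both require $f^{G_3,F}_3(x)=(x+1)^3(x-1)^2(x^2+1)^2=\Phi_2^3\Phi_1^2\Phi_4^2$, which one gets by recomputing the per-momentum block directly: $\det A^{G_3,F}=-1$ and $\mathrm{tr}\,\widehat{U}^{F}(k)=\tfrac{4\cos(2\pi k/3)-1}{3}$ give the factor $(x+1)\bigl(x^2-\tfrac{2}{3}(1+2\cos(2\pi k/3))x+1\bigr)$ rather than the one printed in Proposition \ref{Gfactorization}. With that corrected polynomial, $\det(I_9-uU^{G_3,F}_3)=(1+u)^3(1-u)^2(1+u^2)^2=\tfrac{(u^2-1)(u^4-1)^2}{u-1}$ with no sign flip ($a-b=-2$ is even, $C=+1$), and the rest of your bookkeeping goes through. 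So your proposal as written proves the M-type statement but derives the wrong rational function, and hence the wrong multiple-gamma parameters, in the F-type case; the missing step is detecting and repairing the inconsistency in the F-type characteristic polynomial before invoking Theorem \ref{ExplicitAZeta}.
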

\begin{proof}
    The proof is almost the same as that of Theorem \ref{HAZ}. Remark that the explicit forms of the zeta functions are as follows:
    \begin{align*}
        \zeta^{G_3,M}_{C_3}(u) &= \frac{u-1}{ (u^2-1)^2 (u^3-1)^2 },   \\
        \zeta^{G_3,F}_{C_3}(u) &= \frac{ u-1 }{ (u^2-1) (u^4-1)^2 }.
    \end{align*}
    Moreover, both of them are absolute automorphic forms of weight $ -9 $.
\end{proof}

\section{Conclusion}\label{conclusion}
In this paper, we dealt with the periods of some kind of quantum walks on cycle graphs and absolute zeta functions of their zeta functions. The quantum walks we studied in this paper are Hadamard walks and Grover walks with 3 states. We provided a simple and unified proof for these periods using the notion of cyclotomic polynomials. Moreover, we pointed out that it is possible to compute the absolute zeta functions of such quantum walks by applying a theorem of Kurokawa et al. \hspace{-4pt}(Theorem \ref{ExplicitAZeta}) if the period is finite. Also, we obtained some functional equations of such absolute zeta functions by the same theorem.



\par
\
\par
\noindent


\end{document}